\newcommand*{\sublabel}{%
\let\subcaption@ORI@label=\@tufte@orig@label%
\subcaption@label}
\newcommand{\smallparagraph}{\paragraph}
\begin{document}

\marginnote[43\baselineskip]{%
\noindent \emph{Yufan Huang \& David F. Gleich} \\
\noindent Purdue University $\cdot$ Computer Science 
\texttt{\{huan1754,dgleich\}@purdue.edu}

\bigskip 
\noindent \emph{C. Seshadhri} \\
\noindent University of California, Santa Cruz $\cdot$ Computer Science \\ 
\noindent \texttt{sesh@ucsc.edu}
\bigskip 

\noindent Huang and Gleich's work was funded in part by 
NSF CCF-1909528, IIS-2007481, 
DOE DE-SC0023162, and the IARPA Agile 
program. \\
\noindent Seshadhri's was funded by NSF DMS-2023495, CCF-1839317, and CCF-1908384.
}

\title{Theoretical bounds on the network community profile from low-rank semi-definite programming}

\author{Yufan Huang $\cdot$ C. Seshadhri $\cdot$ David F. Gleich}

\maketitle

\begin{abstract}
We study a new connection between a technical measure called $\mu$-conductance that arises in the study of Markov chains for sampling convex bodies and the network community profile that characterizes size-resolved properties of clusters and communities in social and information networks. The idea of $\mu$-conductance is similar to the traditional graph conductance, but disregards sets with small volume. We derive a sequence of optimization problems including a low-rank semi-definite program from which we can derive a lower bound on the optimal $\mu$-conductance value. These ideas give the first theoretically sound bound on the behavior of the network community profile for a wide range of cluster sizes. The algorithm scales up to graphs with hundreds of thousands of nodes and we demonstrate how our framework validates the predicted structures of real-world graphs.  
\end{abstract}

\section{Introduction}
One of the central themes of network science is the discovery of peculiar properties that are not exhibited by random or geometric graphs. Over the past decade, network science has built a rich repository of data sets derived
from social network, communication networks, biological data, internet trace data, and more. Early measurements on these networks demonstrated  skewed degree distributions, high clustering coefficients, and community structure~\cite{BarabasiAlbert99,WaSt98,Ne03,Newman-2006-modularity}. These measurements led to fundamentally new mechanisms that explain the networks~\cite{ca-astro,SeKoPi11,Bonato-2014-dimmatch}. Accurately capturing and understanding these properties is critical to understanding the limits of what is possible with rich empirical data in graph-based learning~\cite{SSS20}. 

But many important network quantities are computationally intractable in the worst case and are only computed by heuristics. It is of critical importance to have rigorous theory that can guarantee the accuracy of these measurements. 


\emph{We focus on one of the most significant network characteristics: the cluster structure~\cite{Flake-2000-communities,Newman-2006-modularity,Luxburg-2012-clustering}.} Finding tightly connected sets of vertices with few connections outside is a central task in network analysis. This is often measured by the conductance. The conductance of a set $S$ of vertices is the normalized fraction of edges that leave the set (the normalization is more involved; we give a formal definition later). 

An important development in the cluster structure of real-world networks was the discovery of set size versus conductance relationships~\cite{LLDM08, Leskovec-2009-community-structure, LLM10, Gleich-2012-neighborhoods,Jeub-2015-locally}. 
The key finding in these studies is counter-intuitive: in most real-world datasets, we cannot find \emph{large sets} of \emph{small conductance}. An example of this structure is shown in Figure~\ref{fig:intro}. This finding directly contradicts the behavior of conductance in graphs that are derived from nearest neighbors in a geometry or graphs commonly used in partitioning computational domains, where the smallest conductance values occur in large sets. Moreover, the definition of minimum conductance is typically biased towards large sets (see equation \eqref{eq:conductance}), but real-world networks exhibit the opposite behavior.

The key finding is the behavior of the \emph{network community profile (NCP)}. The NCP plots, for each $s$, the minimum conductance among sets of size  (technically volume) $s$. (Refer to Figure~\ref{fig:intro}.) Observe how the plot (the blue line) slopes upward after an initial dip. This trend is consistent across many real-world networks. The NCP of a typical geometric graph slopes downwards. Currently, the NCPs are generated entirely through principled heuristic computations. Hence, it is difficult to guarantee the characteristic real-world behavior of the NCP curve without appropriate theoretical bounds. Our proposed algorithm is the first that can actually give a lower bound on the minimum conductance at a fixed size $s$.


\begin{tuftefigure}[t]
     \centering
        \includegraphics[width=\linewidth]{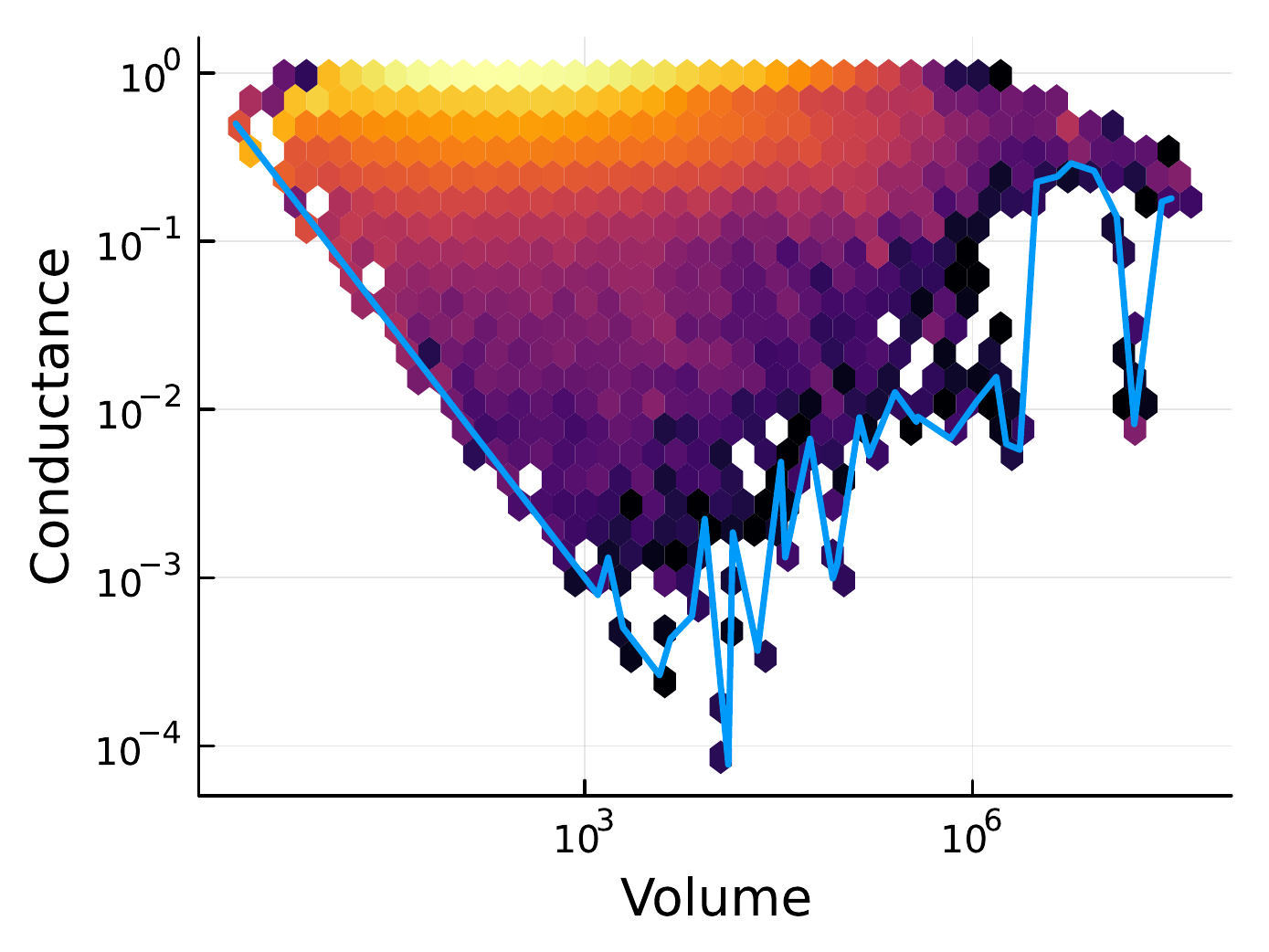}
\caption{
This is a heatmap over size (measured by set volume) and conductance values from  around 630k sets identified by a seeded PageRank conductance minimizing procedure. This picture shows the expected and standard behavior of the size-resolved set conductance in a real-world graph (\textsc{LiveJournal} social network with around 5M vertices and 40M undirected edges). Namely, we see that the smallest conductance sets are those that are also small (volume less than $10^5$ although the graph volume is $10^8$). As sets get larger, their conductance grows. The lower envelope of these measurements is called the network community profile (NCP). The key insight is that for real-world networks, these network community profiles go up and to the right. This is a consistent trend for many real-world social and information networks. 
 A weakness with this empirical finding is that there are no \emph{lower bounds} for seeded PageRank that would certify the behavior of the overall profile. Our paper provides those bounds.}
\label{fig:intro}
\end{tuftefigure}

\begin{center} 
The primary motivating question for our paper is: \emph{can we design theoretically rigorous algorithms that give practically viable bounds for NCP?}
\end{center}

\smallparagraph{Main contributions.} 

\begin{asparaenum}
\item Our main conceptual contribution is a connection between the technical notion of $\mu$-conductance from Markov chain theory~\cite{Lovasz-1990-mixing} and the empirical observations from social and information networks focused of the NCP. We discover that the interesting parts of the NCP basically correspond to a plot of $\mu$-conductance. While this is easy to see (in hindsight), our insight provides us with an array of technical tools to address our main question.
\item We begin with a spectral relaxation to compute the $\mu$-conductance. Unlike the standard relaxation for conductance that leads to the second eigenvalue, the $\mu$-conductance program is non-convex. We give a further convex relaxation using semi-definite programs (SDPs). Unfortunately, this program would require super-quadratic time to solve and is practically infeasible. We give a computationally viable low rank formulation (but non-convex) of the SDP. We prove that locally optimal KKT points of this optimization problem yield rigorous lower bounds for the NCP points. (This is stated in Theorem~\ref{thm:main}, our main theoretical result.) We note that this is first theoretically sound and practically viable lower bound for the NCP.
\item The low rank SDP can be solved with over 200k nodes. Using our algorithms (and Theorem~\ref{thm:main}), we provide the first validation on the shape of the NCP on real-world data, as first discovered in~\citet{Leskovec-2009-community-structure}. Even though our bound is loose, our lower bound validates the characteristic NCP plot and tracks the upward increase in conductance for larger set volume (Figure~\ref{fig:intro-ncps}). 
We are able to distinguish somewhat anomalous graphs such as Deezer (Figure~\ref{fig:deezer}) with a ``flat'' NCP, known to occur in particularly dense real-world networks~\cite{Jeub-2015-locally}. Furthermore, with our new tool, we are able to study, with theoretical confidence, the NCP as the graph is ``peeled'' by $k$-core analysis. 
\end{asparaenum}

\begin{fullwidthfigure}[t]
\centering
     \begin{subfigure}[t]{0.32\linewidth}
         \centering
         \includegraphics[width=\linewidth]{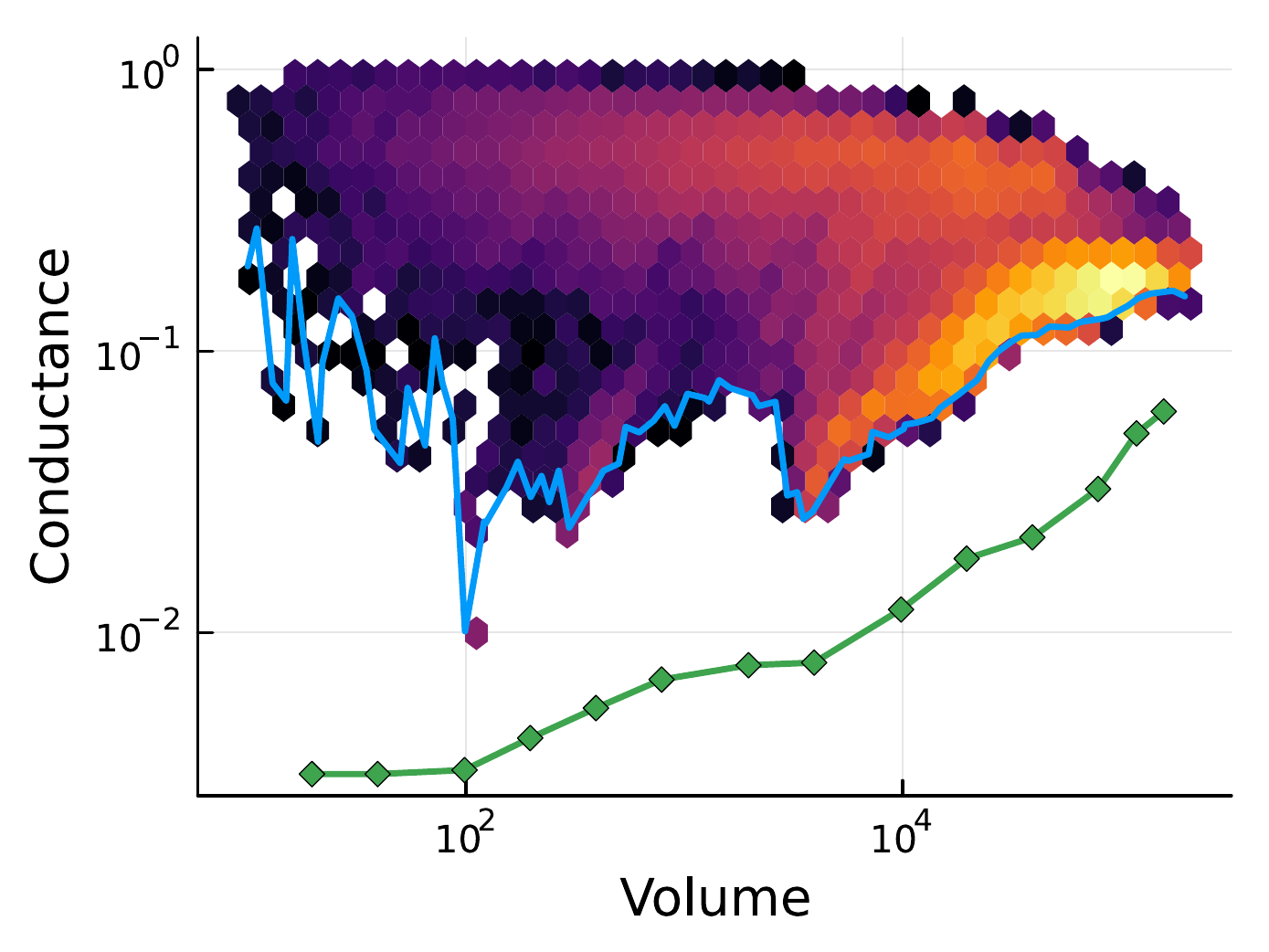}
         \caption{\textsc{AstroPh} }
         \sublabel{fig:astroph}
     \end{subfigure}
     \hfill
     \begin{subfigure}[t]{0.32\linewidth}
         \centering
         \includegraphics[width=\linewidth]{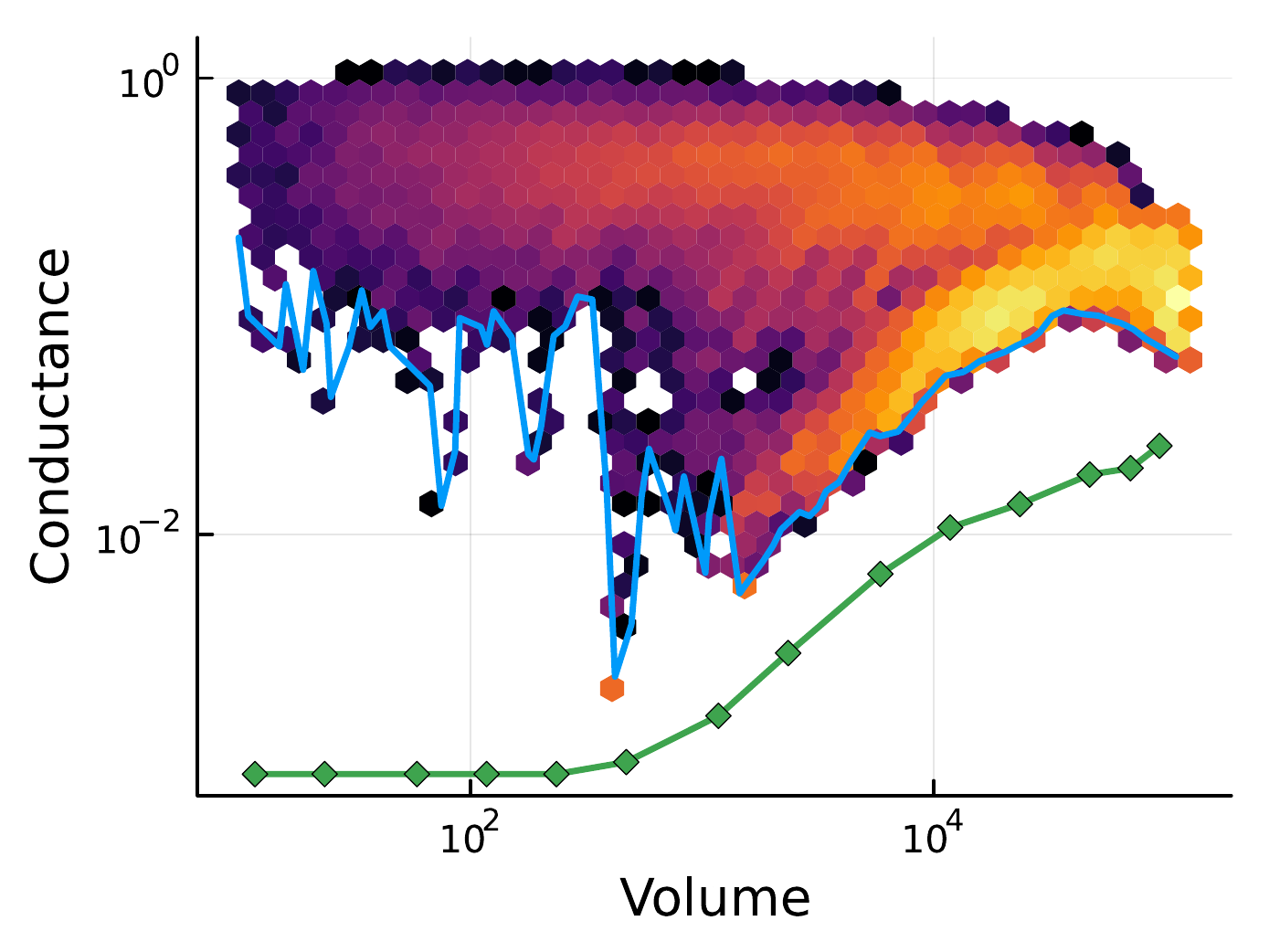}
         \caption{\textsc{HepPh}}
         \sublabel{fig:hepph}
     \end{subfigure}
     \hfill
     \begin{subfigure}[t]{0.32\linewidth}
         \centering
         \includegraphics[width=\linewidth]{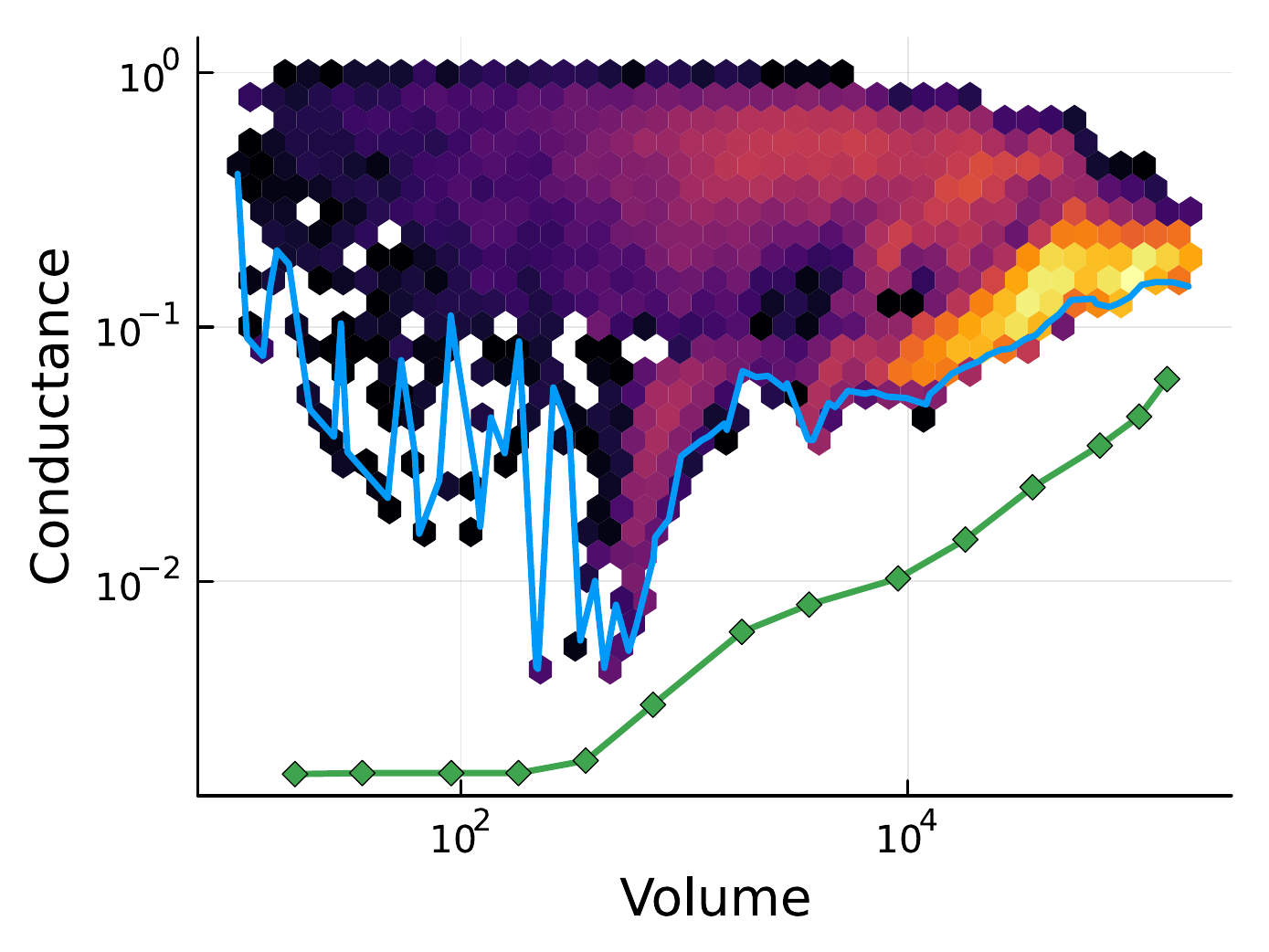}
         \caption{\textsc{email-enron}}
         \sublabel{fig:email-Enron}
     \end{subfigure}
     \\
     \begin{subfigure}[t]{0.32\linewidth}
         \centering
         \includegraphics[width=\linewidth]{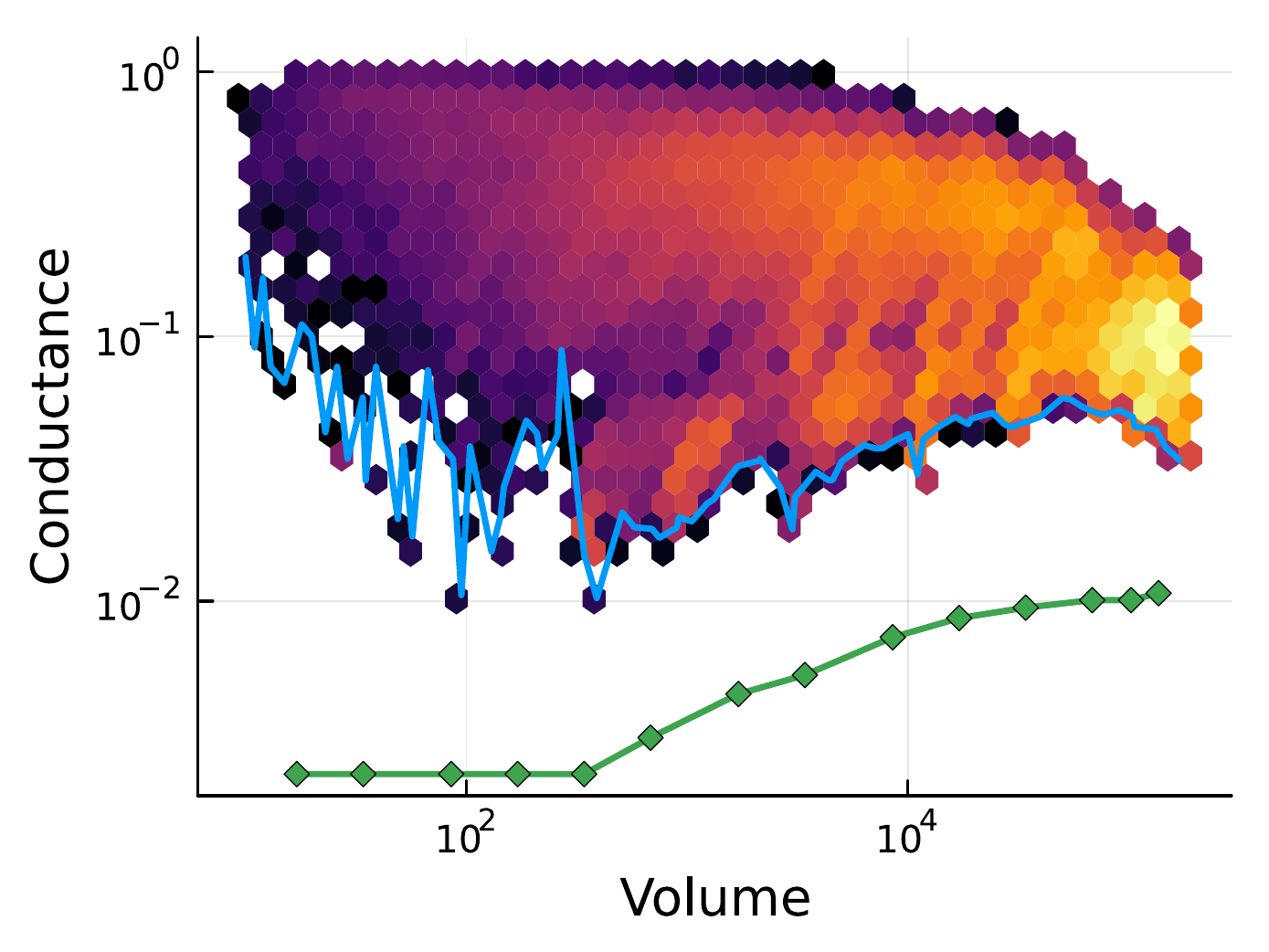}
         \caption{\textsc{facebook-page}}
         \sublabel{fig:facebook}
     \end{subfigure}
     \hfill
     \begin{subfigure}[t]{0.32\linewidth}
         \centering
         \includegraphics[width=\linewidth]{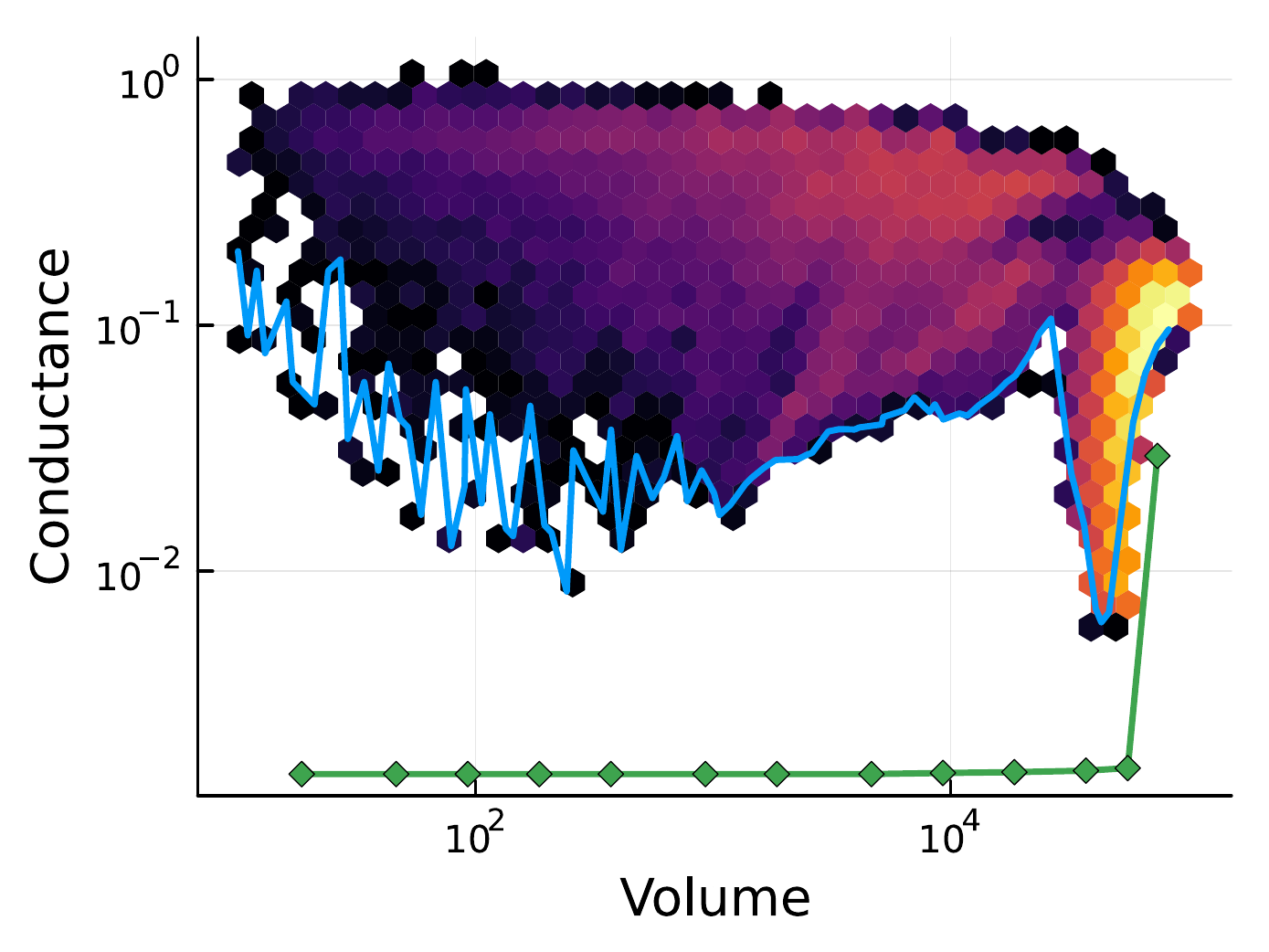}
         \caption{\textsc{deezer}}
         \sublabel{fig:deezer}
     \end{subfigure}
     \hfill
     \begin{subfigure}[t]{0.32\linewidth}
         \centering
         \includegraphics[width=\linewidth]{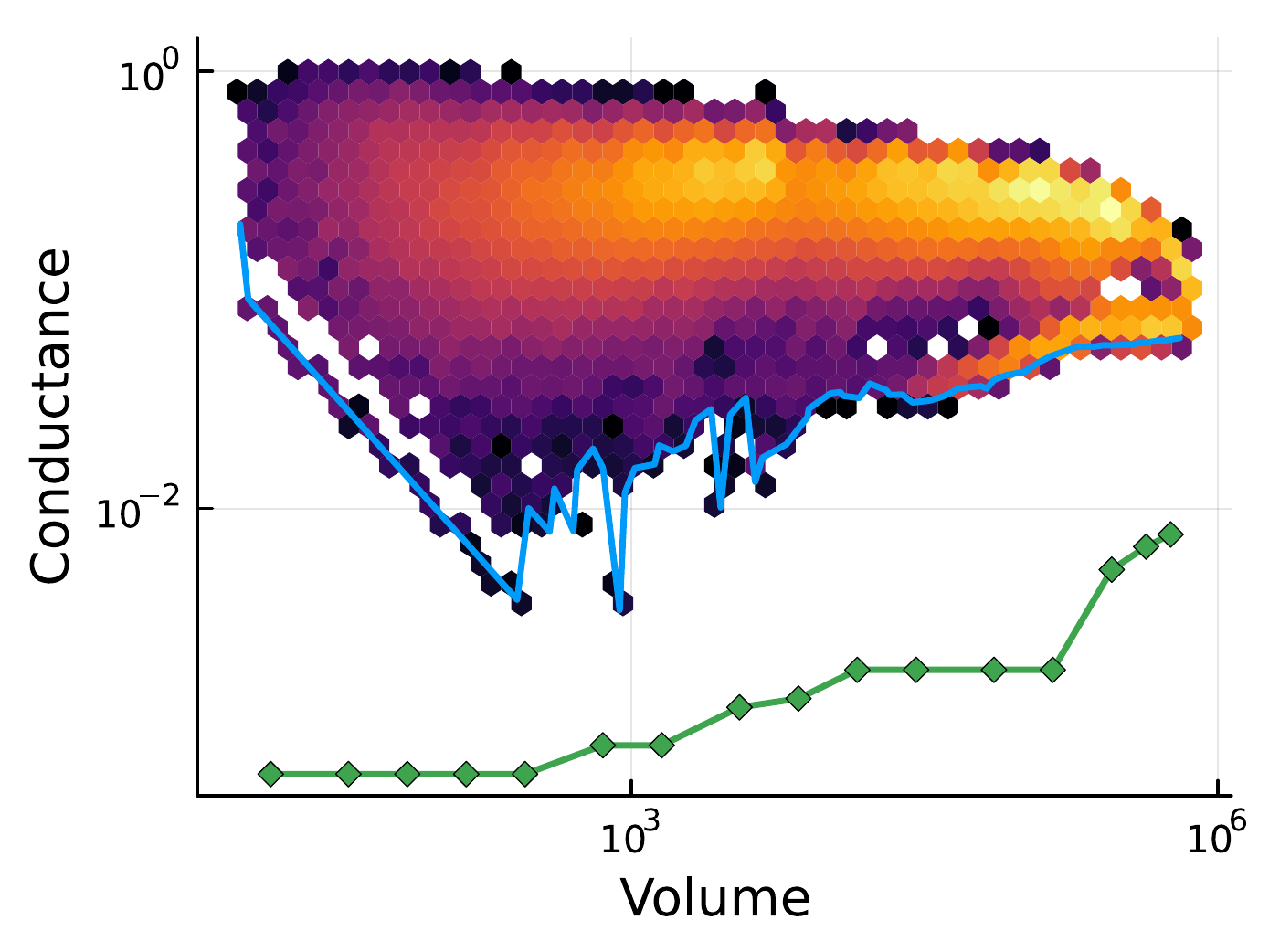}
         \caption{\textsc{DBLP}}
         \sublabel{fig:amazon}
     \end{subfigure}
\caption{Using the theory and algorithms proposed in this paper, we show the empirical network community profile along with our new lower bound for 6 real-world networks (the green line is our lower bound). This is the first guarantee on the behavior of these profiles that establishes a smooth transition from the sets of small conductance to sets of larger conductance. Note that this does not occur for all networks. For example, the Deezer network displays a flat profile until $\mu$ becomes really large, and our results confirm that we should not expect better small conductance sets. The gap between the measured conductances is expected because our analysis only gives a rough, yet informative, lower bound. }
\label{fig:intro-ncps}
\end{fullwidthfigure}

\smallparagraph{High-level outline.} We give a high-level outline of the main ideas in our paper, and explain the chain of theoretical insights that lead to the final practical lower bounds. Our starting point is the notion of 
$\mu$-conductance, discovered in the context of mixing bound for random walks in high dimensional bodies~\cite{Lovasz-1990-mixing}. It is defined formally in equation~\eqref{eq:mu-cond}. Simply put, the $\mu$-conductance value is the minimal conductance restricted to sets with a $\mu$-fraction of the total graph. It was originally proposed to improve the bounds on performance of volume sampling algorithms for convex bodies and study Markov chains where small sets need not have large conductance. 
\emph{One can essentially generate the NCP by computing $\mu$-conductance for varying values of $\mu$ where $\mu$ fixes the volume scale for the sets under consideration.}

\begin{margintable}
\caption{Network Datasets. We report the number of vertices and edges of the largest connected component with self-loops removed.}
\label{tab:dataset}
\begin{sc}
\noindent
\begin{tabular}{@{}l@{\,}cc@{}}
\toprule
Dataset & $|V|$ & $|E|$ \\
\midrule
HepPh & 11,204 & 117,619 \\
AstroPh & 	17,903 &  196,972 \\
Facebook-Page & 22,470 & 170,823 \\
Deezer-EUR & 28,281 & 92,752 \\ 
email-Enron & 33,696 & 180,811 \\
DBLP & 226,413 & 716,460 \\
\bottomrule
\end{tabular}
\end{sc}

\end{margintable}

But, for a given $\mu$, how to study the optimal $\mu$-conductance? A natural starting point is the classic spectral relaxation for the conductance of the graph. The conductance is co-$\mathbb{NP}$ hard to compute, but one can consider a continuous relaxation \eqref{eq:spectral_cut}. This relaxation is convex and the optimal objective is the second eigenvalue, or spectral gap, of the Laplacian. Since the $\mu$-conductance is a constrained version of conductance, we can adapt that constraint into the spectral relaxation. That yields the program \eqref{eq:mu_spectral_cut}. Note that these programs optimize over vectors, rather than sets. The spectral program for $\mu$-conductance has extra constraints bounding each entry of the vector. These extra constraints lead to a non-convex program, showing how computing $\mu$-conductance is significantly harder than conductance.

We now make a further relaxation, wherein we replace the vector by a positive semi-definite matrix. This relaxation leads to the semi-definite program (SDP) given in \eqref{eq:mu_spectral_cut_sdp}. SDPs are convex programs, but the number of variables is quadratic in the number of vertices. This program is infeasible for graphs with even tens of thousands of nodes. 

To get a practically viable optimization problem, we formulate a low rank version of the SDP, stated in \eqref{eq:lowrank_sdp}. But this problem is non-convex and cannot be solved globally. We now arrive at the deepest technical insight in our result. Consider locally optimal KKT points of the low rank (non-convex) SDP. We do a careful comparison of the KKT conditions of the convex program \eqref{eq:mu_spectral_cut_sdp} and the low rank, non-convex \eqref{eq:lowrank_sdp}. We discover that KKT points of \eqref{eq:lowrank_sdp} satisfy all KKT conditions of \eqref{eq:mu_spectral_cut_sdp}, barring one dual feasibility constraint. The violation of this constraint gives a bound on how far the low-rank KKT points are from the original SDP optimum. So, we can subtract out this violation from the objective of the low-rank KKT point, and get a provably correct lower bound on the SDP objective (which is a lower bound on the $\mu$-conductance). 

Our code to solve these problems is available from
\begin{center}
\url{https://github.com/luotuoqingshan/mu-conductance-low-rank-sdp}.
\end{center}


\smallparagraph{Potential implications for random walks.}
Random walks are a central tool in modern network analysis.
A common practice in graph-based learning and embedding is to use a random process to sample a region of the graph~\cite{deepwalk,line, node2vec}. Likewise, there are many results that attempt to estimate quantities based on a random sample of a graph~\cite{LF06, ANK10, ribeiro2010estimating, MB11, RT12, ahmed2014graph}. Many of these results have a theoretical bound that depends on the mixing time of the random walk~\cite{DaKu14, ChDa+16, CH18}, which is bounded by the conductance. As the NCPs show, the minimum conductance may be quite small, but only because of sets of small size. So global properties of the graph might not be affected by such small sets. 
Our new $\mu$-conductance theory suggests that the standard mixing time bounds (based on conductance) may be quite pessimistic when the sampling involves a large set in the graph. It is likely that $\mu$-conductance  gives a better estimate of the mixing time for many applications and studying this is an exciting direction for future work.

\section{Preliminaries and Technical Setting }
\label{sec:background}

Throughout the paper, we work with undirected graphs. Our methods and definitions are applicable to graphs with non-negative edge weights. 
Assume, without loss of generality, the vertices $V$ are labeled from $1$ to $n$. Let $E$ be the set of edges (we assume both $(i,j)$ and $(j,i)$ are in $E$ for an undirected graph). Let $\mA$ be the symmetric adjacency matrix where $A_{i,j}=A_{j,i}$ is equal to the edge weight, or is just $1$ for an unweighted graph, and $A_{i,j}$ is 0 for each non-edge.   Let $S$ be a set of vertices in the graph and let $\Sbar$ be the complement set $\Sbar = V \setminus S$. The notation $\partial S$ indicates is the number of edges (or total edge weight) needed to separate the set $S$ from the rest of the graph: $\partial S = \sum_{(i,j) \in E, i \in S, j \in \Sbar} A_{i,j}$. The notation $\vol(S)$ is the sum of edges involving vertices in $S$: $\vol(S) = \sum_{(i,j) \in E, i \in S} A_{i,j}$. By convention, we set $\vol(G) = \vol(V)$. We write $\ve$ for the vector all ones, so $\vol(G) = \ve^\T \mA \ve$ and $\Tr(\cdot)$ denotes the trace.

The conductance of a set of vertices is
\begin{align}
\label{eq:conductance}
    \phi(S) = \frac{\partial S}{\min \{\vol(S), \vol(\bar{S})\}}.
\end{align}
In principle, minimizing conductance finds sets where $\vol(S)$ is large and $\partial S$ is small. Thus, it is interesting that empirical NCPs suggest that the \emph{best} conductance sets are not the largest. The $\mu$-conductance of a graph is

\begin{equation}
\phi_\mu(G) = \!\begin{array}[t]{l@{\,\,\,}l} \displaystyle \mathop{\text{minimize}}_{S \subset V} & \phi(S) \\
\text{subject to} & \mu \vol(G) \! \le \! \vol(S) \! \le \! \vol(G) / 2  \end{array}
\label{eq:mu-cond}
\end{equation}
Here we adopt a slightly different definition from Lovasz and Simonovits's original paper. The definitions are similar in spirit as they both neglect sets with volume smaller than a specific volume but the original one involves a perturbed conductance. Note that if the set of smallest conductance in the graph $G$ is \emph{large} with $\vol(S) \approx \vol(G)/2$, then there is no difference between the $\mu$-conductance and conductance values. It is only for graphs with hypothetical real-world NCP structure that we expect to see interesting behavior from $\mu$-conductance. 

\subsection{Cheeger inequalities and spectral cuts.}
The Cheeger inequality gives a two-sided bound to the set of best conductance in a graph via an eigenvector computation~\cite{Chung-2007-four,cheeger69}. Our manuscript focuses on lower bounding the conductance of sets, rather than upper-bounding them, so we are only concerned with one side of the Cheeger inequality.  The eigenvector computation uses the Laplacian matrix $\mL = \mD - \mA$, where $\mD$ is a diagonal matrix of row-sums of $\mA$, that is, $\mD = \text{Diag}(\vd)$ where $\vd = \mA \ve$. Formally, let 
\begin{equation}
\tag{Spectral Cut}\label{eq:spectral_cut}
\begin{array}[t]{lll}
\lambda_2 = &\underset{\vx \in \R^V}{\text{minimize}} & \vx^\T \mL \vx \\[1ex]
  & \text{subject to} & \vx^\T \mD \vx = 1\\
  &  & \vx^\T \vd = 0.
 \end{array}
\end{equation}
The value $\lambda_2$ is the second smallest generalized eigenvector of $\mL \vx = \lambda \mD \vx$. 
This eigenvector problem is called a spectral cut because $\vx^T \mL \vx$ computes a cut in the graph that we have relaxed over the space of eigenvectors. It is well-known that 
\[ 
\lambda_2/2 \le \mathop{\text{min}}_{S\subset V} \; \phi(S).
\]

\subsection{Network community profiles.}
Network community profiles are typically computed by running either seeded PageRank~\cite{andersen2006-local}, a flow improvement algorithm~\cite{Lang-2004-mqi,andersen2008-improve}, or a customized procedure~\cite{Gleich-2012-neighborhoods} over a large number of random seeds with parameters designed to explore a variety of set sizes as in~\cite{Leskovec-2009-community-structure,Jeub-2015-locally}. Formally, the NCP is the lower envelop of the size-vs-conductance over all sets in the graph (see the lower bound in Figure~\ref{fig:intro}). We find it useful to display a heatmap over all sets sampled in addition to the lower envelop. 

\subsection{Spectral Profile}
\label{sec:spectral-profile}
\begin{fullwidthfigure}[t]
     \centering
     \includegraphics[width=1.0\linewidth]{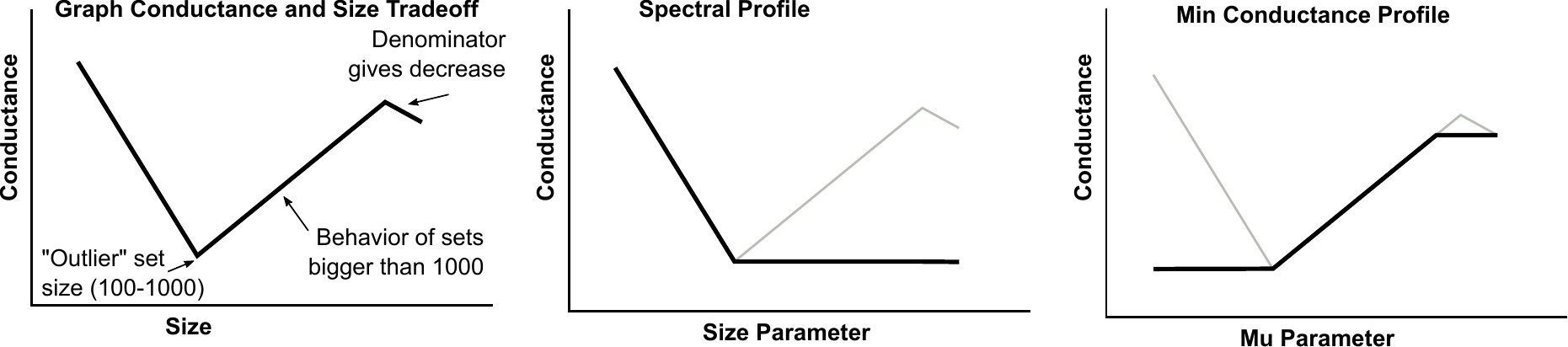}
\caption{Three different notions of a network profile. The left figure shows a sketch of the \citet{Leskovec-2009-community-structure} network community profile, which measures the smallest
conductance of sets with a given volume. The spectral profile \cite{goel2006mixing,Raghavendra:2010:AIS:1806689.1806776} is the dark line in the middle figure that measures relaxations of sets with volume up to $r$ fraction of the maximum volume. The $\mu$-conductance
profile \cite{Lovasz-1990-mixing}, the dark line in the right figure, measures the conductance of sets with at least a $\mu$ fraction of the total volume. }
\label{fig:profile-differences}
\end{fullwidthfigure}
The spectral profile or conductance profile~\cite{goel2006mixing,Raghavendra:2010:AIS:1806689.1806776} is another related graph profile, with a key difference and distinction. These conductance profiles study the behavior of sets with size \emph{up} to a given fraction of the volume. In our notation, 
\begin{equation}
\phi_{\max}^{r}(G) = \begin{array}[t]{ll} \displaystyle \mathop{\text{minimize}}_{S \subset V} & \phi(S) \\
\text{subject to} & \vol(S) \le  r \vol(G). \end{array}
\label{eq:spectral-cond}
\end{equation}
Formally, they study spectral profiles, which are related to eigenvalues, but these are within a factor of $2$ of the conductance values. We illustrate the differences between the profile we are interested in, these spectral profiles, and the network community profile~\cite{Leskovec-2009-community-structure,Jeub-2015-locally} in Figure~\ref{fig:profile-differences}. A network community profile just measures the minimum conductance of sets with a given size (technically we use the volume measure throughout this manuscript), which is \emph{swept} over all possible sizes. (Typically, the given size is taken to be an approximation to make the curve look more smooth.) Here, we have shown a characteristic network community profile as described in~\citet{Leskovec-2009-community-structure} (and illustrated in Figure~\ref{fig:intro}) and annotated it with the features that give rise to the characteristic shape.

\subsection{Balanced Cut}
\label{sec:balanced-cut}
Balanced cut is another common problem that seeks to find a set, or group of sets, that are balanced with respect to the size of the graph. It has traditionally been important in parallel computing where balance implies equally distributed workloads. This is similar to $\mu$-conductance with the size of the vertex set instead of volume. Although this is related to the NCP, the techniques for balanced cut tend to focus on good approximation algorithms. Since many of these techniques give approximation algorithms with unknown or hidden constants, they cannot directly translate into lower bounds. It is likely that a suitable adaptation of our techniques might also give lower bounds for balanced cuts as well. 

\section{Main Theorem}
The main theorem of our paper is a computable and informative lower bound on the $\mu$-conductance of a graph. 
\begin{theorem}
\label{thm:main}
Let $G$ be a connected, undirected graph. Fix $0 \le \mu \le 1/2$. Let $\mY^*$ and $\theta$ be from Algorithm~\ref{alg:main}. Then
 \[ \tfrac{1}{2} (\Tr(\mY^* \mL \mY^*) - \theta \cdot \min \{1, \frac{(1 - \mu) n}{\mu \vol(G)}\}) \le \phi_\mu(G). \]
\end{theorem}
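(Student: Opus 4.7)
The plan is to chain three ingredients: (i) a Cheeger-style bound that relates $\phi_\mu(G)$ to the optimum of the $\mu$-spectral relaxation \eqref{eq:mu_spectral_cut} (not shown in the excerpt but promised by context); (ii) an SDP relaxation argument that shows the optimum of \eqref{eq:mu_spectral_cut_sdp} is no larger than that of \eqref{eq:mu_spectral_cut}, and hence still lower bounds $2\phi_\mu(G)$; and (iii) a KKT-based weak-duality argument that converts a stationary point $\mY^*$ of the non-convex low-rank program \eqref{eq:lowrank_sdp} into a certifiable lower bound on the convex SDP optimum, with $\theta$ quantifying the one KKT condition that fails to transfer. Multiplying the final lower bound by $\tfrac12$ absorbs the Cheeger factor.

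For ingredient (i), I would mimic the classical derivation: given a feasible $S$ for \eqref{eq:mu-cond}, the centered indicator $\vx_S \propto \ind_S - (\vol(S)/\vol(G))\ve$, suitably normalized, is feasible for \eqref{eq:mu_spectral_cut} and achieves objective at most $2\phi(S)$; the entry bound built into \eqref{eq:mu_spectral_cut} is exactly what rules out sets with $\vol(S) < \mu\vol(G)$. Ingredient (ii) then follows by lifting $\vx \mapsto \vx\vx^\T$: every feasible vector becomes a feasible rank-one PSD matrix with the same objective, so $\text{SDP}^\star \le \lambda_{\mu,2}^\star \le 2\phi_\mu(G)$. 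These two steps are standard and I expect no serious obstacle beyond verifying that the entry-wise bound lifts cleanly into the diagonal/box constraint on $\mY$.

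The heart of the argument — and the main obstacle — is ingredient (iii). Writing $\mY = \mR\mR^\T$ in \eqref{eq:lowrank_sdp} breaks convexity, so a KKT point of the low-rank problem need not be a KKT point of the convex SDP. The plan is to write down the KKT systems of both problems side by side. Stationarity, primal feasibility of the equality constraints ($\Tr(\mD\mY) = 1$, $\ve^\T\mA\mY\mA\ve = 0$, and the trace/linear constraints that encode the $\mu$-restriction), and complementary slackness should transfer directly via the Burer--Monteiro correspondence; the one condition that typically fails is dual feasibility of the multiplier attached to the PSD constraint $\mY \succeq 0$ (equivalently, nonnegativity of the Lagrange multipliers on the entry-wise box constraints). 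I would let $\theta$ be precisely the amount of this violation — the most negative eigenvalue of the candidate PSD multiplier, or the largest violation among the box multipliers — and then shift the candidate dual by $\theta \mI$ (respectively by $\theta$ on the appropriate coordinates) to restore dual feasibility. Weak SDP duality then gives $\text{SDP}^\star \ge \Tr(\mY^*\mL\mY^*) - \theta\cdot C$, where $C$ is the trace of the perturbation applied to make the dual feasible.

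Finally, for the $\min\{1,\,(1-\mu)n/(\mu\vol(G))\}$ factor: one choice of restorative perturbation uses $\theta\mI$ against the PSD multiplier, which couples to the primal normalization $\Tr(\mD\mY) = 1$ and contributes a penalty of $\theta\cdot 1$; the alternative uses the diagonal box multipliers, which sum to at most $n$ times the per-entry bound $(1-\mu)/(\mu\vol(G))$, contributing $\theta\cdot(1-\mu)n/(\mu\vol(G))$. Taking whichever is tighter yields the stated minimum. Combining with (i) and (ii), $\phi_\mu(G) \ge \tfrac12\,\text{SDP}^\star \ge \tfrac12\bigl(\Tr(\mY^*\mL\mY^*) - \theta\cdot\min\{1, (1-\mu)n/(\mu\vol(G))\}\bigr)$, which is the claim. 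The delicate bookkeeping in step (iii) — identifying exactly which dual constraint is violated, showing the proposed shift leaves stationarity intact, and matching its cost to the stated minimum — is where I expect the real work to concentrate.
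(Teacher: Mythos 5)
Your high-level chain is exactly the paper's: a Cheeger-type lemma ($\lambda_\mu/2 \le \phi_\mu$, the paper's Lemma~\ref{lem:spectral-mu-cond-lower}), then the standard rank-one lifting $\vx \mapsto \vx\vx^\T$ ($\lambda_\mu^{\text{sdp}} \le \lambda_\mu$, Lemma~\ref{lem:mu-cond-to-sdp}), then a KKT/duality argument that quantifies the single violated condition at a Burer--Monteiro stationary point by $\theta = -\min\{0,\lambda_{\min}(\mZ^*)\}$ and subtracts it off (Lemma~\ref{lem:lowerbound}). So this is essentially the same route; I only flag one cosmetic difference and one bookkeeping imprecision.

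The cosmetic difference: you restore dual feasibility by directly shifting the candidate multipliers (push $\vgamma \to \vgamma - \theta\ones$ with a compensating $\vg \to \vg + \theta\ones$, or push $\lambda \to \lambda - \theta$), then invoke weak duality for the \emph{original} SDP. The paper instead defines a \emph{perturbed primal} with objective $\Tr((\mL+\theta\mI)\mX)$, for which the untouched multipliers are already dual feasible and give dual value $\Tr(\mL\mX^*)$, and then bounds $\hat\lambda_\mu^{\text{sdp}} - \lambda_\mu^{\text{sdp}} \le \theta\Tr(\mX_{\text{opt}})$. These are the same duality move in two dresses and produce the identical bound.

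The imprecision: you attribute the $\theta\cdot 1$ branch to ``$\theta\mI$ against the PSD multiplier, which couples to the primal normalization $\Tr(\mD\mY)=1$.'' But $\theta\mI$ couples to $\Tr(\mX)$, not $\Tr(\mD\mX)$; in the dual-shift picture the $\theta\cdot 1$ branch corresponds to moving $\lambda$, which changes $\mZ$ by $+\theta\mD$ rather than $+\theta\mI$, so you additionally need $\mD \succeq \mI$ (i.e.\ every degree $\ge 1$) to conclude $\mZ^*+\theta\mD \succeq \mZ^*+\theta\mI \succeq 0$. The paper's primal-side accounting makes this hypothesis visible in one place, as $\Tr(\mX_{\text{opt}}) \le \Tr(\mD\mX_{\text{opt}}) = 1$. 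The $\vgamma$-shift branch giving $\theta\cdot(1-\mu)n/(\mu\vol(G))$ is fine as you describe it. These are details, not a wrong approach; you correctly flagged this bookkeeping as the place where the real work lives.
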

This theorem yields an a posteriori bound as we have no a priori guarantee on the value of $\theta$. In practice, $\theta$ is small, around $10^{-3}$ or $10^{-4}$ in most cases. 

\begin{algorithm}[t]
\begin{algorithmic}[1]
\REQUIRE A graph $G$, a scalar $\mu$, and rank parameter $k$     
\ENSURE A lower bound on $\phi_{\mu}(G)$
\STATE Compute a KKT point of \eqref{eq:lowrank_sdp} (e.g. using an Augmented Lagrangian and LBFGSB as in Section~\ref{sec:methods}).
\STATE Let $\mY^*$ be the solution of \eqref{eq:lowrank_sdp} at the KKT point.
\STATE Let $\theta$ be the value from Lemma~\ref{lem:lowerbound}, found via an eigenvalue computation. 
\STATE \textbf{Return} $\frac{1}{2} (\Tr(\mY^* \mL \mY^*) - \theta \cdot \min \{1, \frac{(1 - \mu) n}{\mu \vol(G)}\})$.
\end{algorithmic}
\caption{MuConductanceLowRankSDPLowerBound}
\label{alg:main}
\end{algorithm}

To prove the main theorem, we work through successive transformations of optimization problems that produce lower bounds on $\mu$-conductance. The first is a spectral program akin to \eqref{eq:spectral_cut}. This is relaxed into a computable SDP. That does not scale to larger problems, and so we translate it into a (non-convex) low-rank SDP. The low-rank SDP can only be locally optimized. Consequently, we derive an a posteriori bound by showing that any local minimizer of the low-rank SDP problem is related to a perturbed SDP. 

\subsection{A spectral program for \texorpdfstring{$\mu$}{mu}-conductance}

The problem \eqref{eq:spectral_cut} is equivalently stated 
$\text{min } \frac{\vx^\T \mL \vx}{\vx^\T \mD \vx} \text{ s.t. } \vx^T \vd = 0$. This form makes a more direct relationship with conductance since if $\vx_S$ is an indicator vector for a set $S$, $\vx_S^\T \mL \vx_S = \partial S$, and $\vx_S^\T \mD \vx_S $ is $\vol(S)$. 
To satisfy $\vx_S^\T \vd = 0$ and $\vx_S^\T \mD \vx_S = 1$, as in~\eqref{eq:spectral_cut} we shift and re-scale $\vx_S$ to  
\begin{align*}
   \vpsi_S =  \sqrt{\tfrac{\vol(G)}{\vol(S) \vol(\bar{S})}}\bigl(\vx_S - \tfrac{\vol(S)}{\vol(G) } \ve\bigr).   
\end{align*}
The problem with spectral cut is that \emph{if} the set of minimal conductance is small, the solution $\vx$ is often highly localized. In order to model $\mu$-conductance, consider a set $S$ with volume about $\mu \vol(G)$ and further consider the scaled and shifted indicator vector $\vpsi_S$ on this set. Then we find that $|x_i| \ge \sqrt{\frac{\mu}{(1-\mu)\vol(G)}}$ and $|x_i| \le \sqrt{\frac{1-\mu}{\mu \vol(G)}}$. 
This suggests that if we expect $\vx$ to indicate a large set, something where $\min(\vol(S), \vol(\bar{S}))$ large, then the
elements of $\vx$ should be small, but not too small, and delocalized.
Thus we add constraints to spectral cut \eqref{eq:spectral_cut} to bound the entries, either the infinity norm or maximum of $\vx$ and to separate small entries around zero. 
This should help spread the mass of $\vx$ over the graph as in Figure~\ref{fig:geometric} (where we look at the solution based on the forthcoming SDP). 
Consequently, we pose the following modified spectral cut
\begin{equation}\label{eq:mu_spectral_cut}
\tag{\(\mu\)-Spectral Cut}
\begin{array}[t]{llll}
\lambda_\mu = &\underset{\vx \in \R^V}{\text{minimize}} & \vx^\T \mL \vx \\[1ex]
  & \text{subject to} & \vx^\T \mD \vx = 1 & \text{\itshape{(a)}} \\ 
  &  & \vx^\T \vd = 0 & \text{\itshape(b)} \\
  &  &\|\vx\|_{\infty}\leq \sqrt{\frac{1 - \mu}{\mu\vol(G)}} & \text{\itshape(c)} \\
  &  &|\vx_i|  \geq  \sqrt{\frac{\mu}{(1 - \mu)\vol(G)}} & \text{\itshape(d)} \\
\end{array}
\end{equation}
In particular, the parameter $\mu$ in this program corresponds to the one in $\mu$-conductance.
Notice that for any set $S$ with volume
less than $\mu \vol(G)$, $\vpsi_S$ is not in the feasible region of \eqref{eq:mu_spectral_cut}. Although the optimal solution of \eqref{eq:mu_spectral_cut} does not have to follow the form of $\vpsi_S$, we believe constraints $(c)$ and $(d)$ will rule out  small and localized sets. 

In addition, we have $\lim_{\mu \to 0^+} \lambda_\mu = \lambda_2$. Even stronger, for all $\mu \le$ some $ \mu^*$, we have $\lambda_\mu = \lambda_2$. Thus as $\mu$ gets close to 0, program \eqref{eq:mu_spectral_cut} simplifies to program \eqref{eq:spectral_cut}. We have the following Lemma which is analogous to ``easy side" of the Cheeger inequality.

\begin{tuftefigure}
     \begin{subfigure}[t]{0.5\linewidth}
         \centering
         \includegraphics[width=\linewidth]{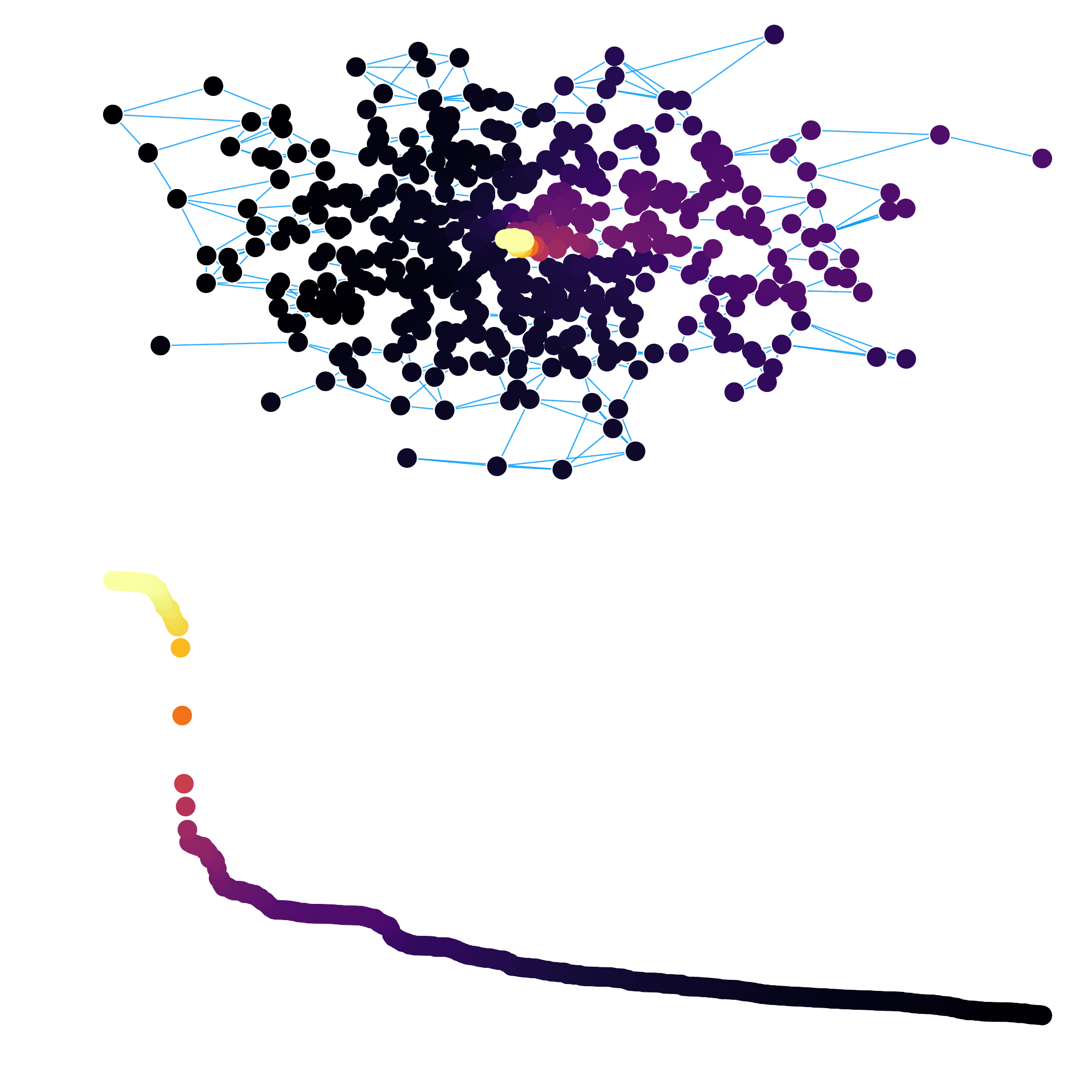}
         \caption{$\mu = 0.03$}
     \end{subfigure}%
     \begin{subfigure}[t]{0.5\linewidth}
         \centering
         \includegraphics[width=\linewidth]{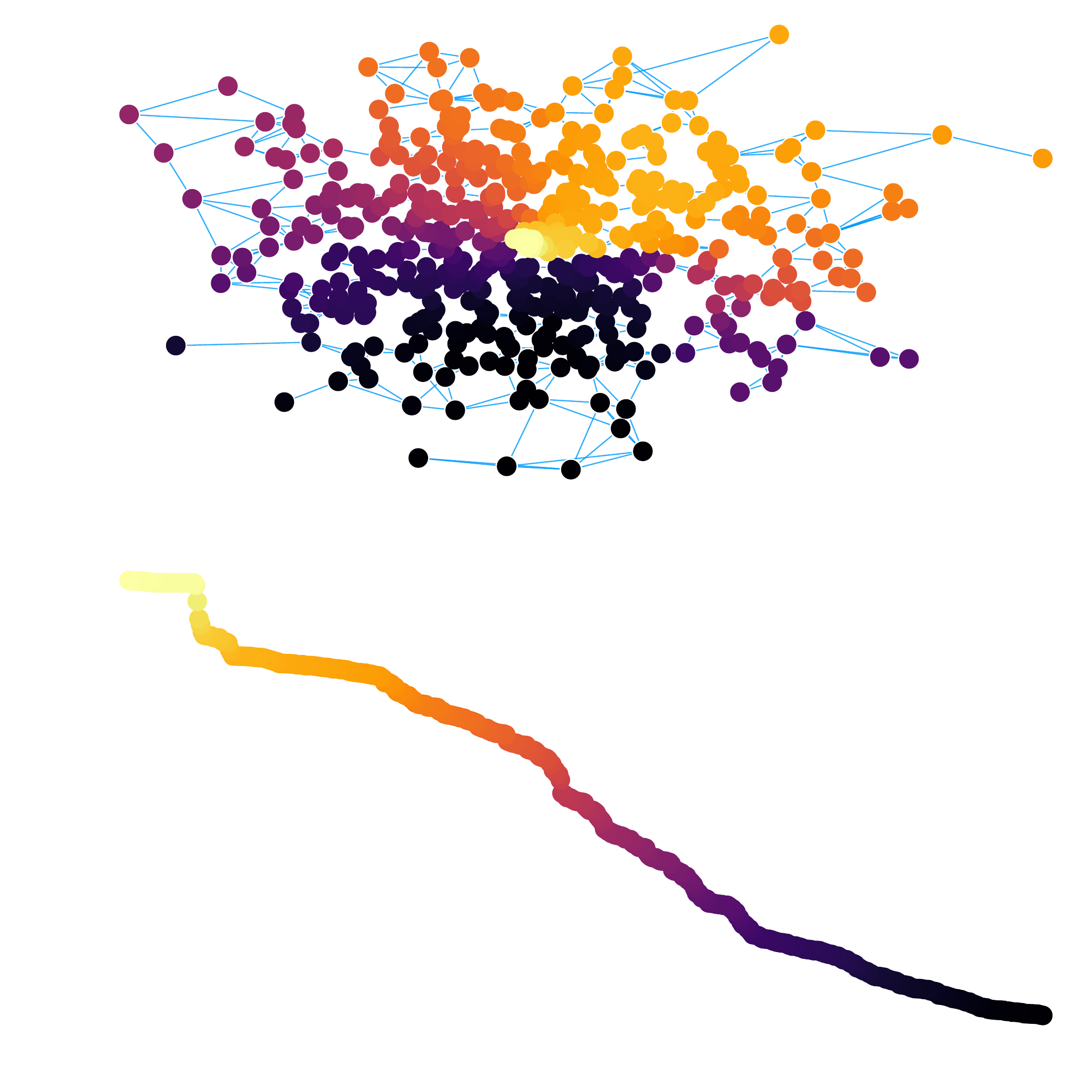}
         \caption{$\mu = 0.3$}
     \end{subfigure}
\caption{We show a vector from the rank-1 approximation of optimal SDP solution $\mX$ on a synthetic graph with 537 nodes and 1327 edges. The vector is shown as colored markers and as sorted values. The graph is constructed to have a small, good conductance set at the center. This shows that as $\mu$ increases, the solution vector delocalizes over the entire network to respond to other sets of reasonably small conductance. }
\label{fig:geometric}
\end{tuftefigure}

\begin{lemma}
\label{lem:spectral-mu-cond-lower}
    For \(0 \le \mu \le 1/2\), 
    $ \frac{\lambda_{\mu}}{2} \le \phi_\mu(G)$.
\end{lemma}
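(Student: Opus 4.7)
The plan is to mimic the standard ``easy direction'' Cheeger argument, but verify that the shifted/rescaled indicator of any $\mu$-conductance-feasible set is also feasible for the extra constraints $(c)$ and $(d)$. Concretely, I would take an arbitrary $S \subset V$ with $\mu\vol(G) \le \vol(S) \le \vol(G)/2$ and construct the test vector
\[ \vpsi_S = \sqrt{\tfrac{\vol(G)}{\vol(S)\vol(\bar S)}}\bigl(\vx_S - \tfrac{\vol(S)}{\vol(G)}\ve\bigr), \]
which takes the value $a := \sqrt{\vol(\bar S)/(\vol(S)\vol(G))}$ on $S$ and $-b := -\sqrt{\vol(S)/(\vol(\bar S)\vol(G))}$ on $\bar S$. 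A direct computation (standard in the Cheeger proof) gives $\vpsi_S^\T \mD \vpsi_S = 1$ and $\vpsi_S^\T \vd = 0$, so constraints $(a)$ and $(b)$ hold.

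Next I would check $(c)$ and $(d)$. Since $\vol(S) \le \vol(\bar S)$, we have $b \le a$, so $\|\vpsi_S\|_\infty = a$ and $\min_i|\vpsi_S(i)| = b$. Plugging the bounds $\mu\vol(G) \le \vol(S)$ and $\vol(\bar S) \le (1-\mu)\vol(G)$ into $a$ gives
\[ a \le \sqrt{\tfrac{(1-\mu)\vol(G)}{\mu\vol(G)\cdot\vol(G)}} = \sqrt{\tfrac{1-\mu}{\mu\vol(G)}}, \]
which is $(c)$. Using the same bounds in $b$ shows $b \ge \sqrt{\mu/((1-\mu)\vol(G))}$, which is $(d)$. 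So $\vpsi_S$ is feasible for \eqref{eq:mu_spectral_cut}.

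Finally I would compute the objective: using $\vx_S^\T \mL \vx_S = \partial S$ and $\ve \in \ker(\mL)$,
\[ \vpsi_S^\T \mL \vpsi_S = \tfrac{\vol(G)}{\vol(S)\vol(\bar S)}\partial S \le \tfrac{2\partial S}{\vol(S)} = 2\phi(S), \]
where the inequality uses $\vol(\bar S) \ge \vol(G)/2$. Since $\vpsi_S$ is feasible, $\lambda_\mu \le \vpsi_S^\T \mL \vpsi_S \le 2\phi(S)$. Taking the infimum over all $S$ with $\mu\vol(G)\le \vol(S)\le \vol(G)/2$ yields $\lambda_\mu \le 2\phi_\mu(G)$, which is the claim.

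The only place that is not entirely routine is the verification of $(c)$ and $(d)$: one must remember that $\vpsi_S$ takes only two distinct values, identify which is the max and which is the min in absolute value (this uses $\vol(S) \le \vol(\bar S)$), and then apply the two-sided volume bound coming from the $\mu$-conductance constraint. Everything else is the same computation that proves the ordinary easy-side Cheeger bound.
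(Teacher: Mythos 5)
Your proof is correct and takes essentially the same route as the paper: construct the scaled, shifted indicator $\vpsi_S$, check it is feasible for \eqref{eq:mu_spectral_cut}, and bound the Rayleigh quotient by $2\phi(S)$. The only difference is that the paper asserts feasibility as ``natural'' without writing out the checks, whereas you explicitly verify constraints $(c)$ and $(d)$ using the two distinct values $a$ and $-b$ of $\vpsi_S$ — a worthwhile elaboration, but not a different argument.
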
 
\begin{proof}
The basic idea is to find a test vector $\vy$ in the feasible region of program 
\eqref{eq:mu_spectral_cut} satisfying $\vy^\T \mL \vy \le 2 \phi_\mu(G)$.
Notice that if $\phi_\mu$ is achieved by the set $T$, then vector 
\[\vpsi_T =  \sqrt{\tfrac{\vol(G)}{\vol(T) \vol(\bar{T})}}\bigl(\ind_T - \tfrac{\vol(T)}{\vol(G) } \ve\bigr)   \] 
is naturally in the feasible region of \eqref{eq:mu_spectral_cut}, where $\ind_T$ is the indicator vector for set $T$. As
\begin{align*}
     \vpsi_{T}^\T \mL \vpsi_{T} 
    &= \frac{\partial T \cdot \vol(G)}{\vol(T) \vol(\bar{T})} \\
    &\leq  \frac{2\partial T}{\min(\vol(T), \vol(\bar{T}))} \\
    &= 2 \phi_{\mu},
\end{align*}
we have $\lambda_{\mu} \leq \vpsi_{T}^\T \mL \vpsi_{T} \leq 2 \phi_{\mu}$.
\end{proof}

Lemma~\ref{lem:spectral-mu-cond-lower} implies that the optimal value of program \eqref{eq:mu_spectral_cut}
can function as a lower bound for the $\mu$-conductance. Furthermore, if we solve program
\eqref{eq:mu_spectral_cut} for different $\mu$s, then the curve of $\lambda^\mu$s 
with respect to corresponding $\mu$ can be a lower bound for the network community profile.

\subsection{A semi-definite program for \texorpdfstring{$\mu$}{mu}-conductance}
We are not aware of any existing techniques to directly solve the problem in the form~\eqref{eq:mu_spectral_cut}. However, it can be relaxed into a semi-definite program (SDP). 
\begin{equation}\label{eq:mu_spectral_cut_sdp}
\tag{\(\mu\)-Cond SDP}
\begin{array}[t]{@{}lll@{}}
\lambda_\mu^{\text{sdp}} = &\underset{\mX \succeq 0}{\text{minimize}} & \Tr(\mL \mX) \\
  & \text{subject to} & \Tr(\mD \mX) = 1 \\
  &  & \Tr(\vd \vd^\T \mX) = 0  \\
  &  &\Diag(\mX) \leq \frac{1 - \mu}{\mu}\frac{\ones}{\vol(G)} \\
  &  &\Diag(\mX) \geq \frac{\mu}{1 - \mu}\frac{\ones}{\vol(G)}.  \\
\end{array}
\end{equation}

The derivation of this relaxation is standard. It follows from replacing $\vx$ with the symmetric positive semi-definite matrix $\mX = \vx \vx^T$ and writing the constraints in an equivalent fashion, then relaxing over all symmetric positive semi-definite matrices. This gives the expected result
\begin{lemma} 
\label{lem:mu-cond-to-sdp}
For \(0 \le \mu \le 1/2 \), we have $\lambda_{\mu}^{\textup{sdp}} \le \lambda_{\mu}$.
\end{lemma}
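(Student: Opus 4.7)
The plan is to show that any feasible point of \eqref{eq:mu_spectral_cut} lifts to a feasible point of \eqref{eq:mu_spectral_cut_sdp} with the same objective value, so that the SDP minimum can only be smaller. This is the standard rank-one relaxation argument and I expect no serious obstacle beyond carefully matching the constraints.

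Concretely, given any $\vx \in \R^V$ satisfying constraints (a)--(d) of \eqref{eq:mu_spectral_cut}, define $\mX = \vx \vx^\T$. Then $\mX$ is symmetric positive semi-definite (being rank-one and of the outer-product form), so it is admissible for \eqref{eq:mu_spectral_cut_sdp}. The objective matches because $\Tr(\mL \mX) = \Tr(\mL \vx \vx^\T) = \vx^\T \mL \vx$. Similarly, $\Tr(\mD \mX) = \vx^\T \mD \vx = 1$ from (a), and $\Tr(\vd \vd^\T \mX) = (\vd^\T \vx)^2 = 0$ from (b).

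For the diagonal constraints, note that $\Diag(\mX)_i = x_i^2$. Constraint (c), $\|\vx\|_\infty \le \sqrt{(1-\mu)/(\mu \vol(G))}$, squares to $x_i^2 \le (1-\mu)/(\mu \vol(G))$, matching the upper bound on $\Diag(\mX)$. Constraint (d), $|x_i| \ge \sqrt{\mu/((1-\mu)\vol(G))}$, squares to $x_i^2 \ge \mu/((1-\mu)\vol(G))$, matching the lower bound on $\Diag(\mX)$. Thus $\mX$ is feasible for \eqref{eq:mu_spectral_cut_sdp} with the same objective as $\vx$ achieved in \eqref{eq:mu_spectral_cut}.

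Taking $\vx$ to be an optimizer of \eqref{eq:mu_spectral_cut} (or a minimizing sequence, if needed), the corresponding $\mX$ is feasible for the SDP with objective $\lambda_\mu$. Hence $\lambda_\mu^{\textup{sdp}} \le \lambda_\mu$. The only subtle point is the squaring step that converts the entrywise absolute-value bound (d) into the pointwise lower bound on $\Diag(\mX)$; this is valid because both sides are non-negative, so no sign issues arise. The relaxation is strict in general because arbitrary PSD $\mX$ need not be rank-one, which is exactly what makes the SDP tractable while the original spectral program \eqref{eq:mu_spectral_cut} is non-convex.
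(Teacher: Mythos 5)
Your proof is correct and follows essentially the same route as the paper's: lift a feasible $\vx$ to the rank-one matrix $\mX = \vx\vx^\T$, check that the objective and each constraint (a)--(d) translate into the corresponding SDP constraints, and conclude that the SDP feasible region contains (a superset of) the rank-one lifts. The paper states this in the direction of ``verify all relaxation steps and drop the rank-one condition,'' which is the same argument phrased slightly differently.
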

\begin{proof}
    We verify all steps of the relaxation from \eqref{eq:mu_spectral_cut} to \eqref{eq:mu_spectral_cut_sdp} as the proof of Lemma.

From \eqref{eq:mu_spectral_cut}, let $\vx$ be the variable and let $\mX$  be the rank-1 symmetric positive definite matrix $\mX = \vx\vx^\T$. Then $\vx^\T \mD \vx = 1$ is equivalent to $\Tr(\vx^\T \mD \vx) = \Tr(\mD \vx\vx^\T) = \Tr(\mD \mX) = 1$. Likewise, $\vx^\T \vd = 0$ is equivalent to $(\vx^\T \vd)^2 =\Tr(\vx\vx^\T \vd\vd^\T) = 0$. Finally, for the inequality constraints, $\| \vx \|_\infty \le \alpha$ is equivalent to $x_i^2 \le \alpha^2$ for all $i$, and a similar statement holds for the lower bound on $|x_i|$ too. Thus we arrive at 
\begin{equation*}
\begin{array}[t]{@{}lll@{}}
\lambda_\mu = &\underset{\mX = \vx\vx^\T}{\text{minimize}} & \Tr(\mL \mX) \\
  & \text{subject to} & \Tr(\mD \mX) = 1  \\
  &  &\Tr(\vd \vd^\T \mX) = 0 \\
  &  &\Diag(\mX) \leq \frac{1 - \mu}{\mu}\frac{\ones}{\vol(G)} \\
  &  &\Diag(\mX) \geq \frac{\mu}{1 - \mu}\frac{\ones}{\vol(G)}.  \\
\end{array}
\end{equation*}
Note that this problem is directly equivalent to \eqref{eq:mu_spectral_cut} because of the rank-1 condition $\mX = \vx\vx^\T$. Thus
 we get \eqref{eq:mu_spectral_cut_sdp} and Lemma~\ref{lem:mu-cond-to-sdp} by relaxing the variable $\mX = \vx\vx^\T$ to be a symmetric positive definite matrix. 
\end{proof}

Interestingly, when \(\mu = \frac{1}{2} \), our \eqref{eq:mu_spectral_cut_sdp} is equivalent to the minimium bisection SDP lower bound~\cite{burer2003nonlinear} used in \citet[section 5.2]{Leskovec-2009-community-structure}, which is a previously known lower bound for network community profiles at exactly half the volume. The minimum bisection SDP is
\begin{equation*}
\begin{array}[t]{@{}lll@{}}
\mathcal{C}_G = &\underset{\mY \succeq 0}{\text{minimize}} & \frac14 \Tr(\mL \mY) \\
  & \text{subject to} & \Tr(\vd \vd^\T \mY) = 0  \\
  &  &\Diag(\mY) = \ones. 
\end{array}
\end{equation*}

Formally, we have the following relationship.
\begin{lemma}
\label{lem:equiv-to-min-bisec-sdp}
\[ \frac12 \lambda_{1/2}^{\textup{sdp}} = \frac{2}{\textup{Vol}(G)} \mathcal{C}_G.\]
\end{lemma}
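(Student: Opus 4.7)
The plan is to show that at $\mu = 1/2$ the two SDPs become identical after a simple scaling of the variable by $\vol(G)$. The main observation is that the interval constraints $\frac{\mu}{1-\mu}\frac{\ones}{\vol(G)} \le \Diag(\mX) \le \frac{1-\mu}{\mu}\frac{\ones}{\vol(G)}$ in \eqref{eq:mu_spectral_cut_sdp} collapse to the equality $\Diag(\mX) = \frac{\ones}{\vol(G)}$ when $\mu = 1/2$. This is the crux; everything else is a routine substitution.

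First I would substitute $\mu = 1/2$ into \eqref{eq:mu_spectral_cut_sdp} and use the diagonal equality just noted. Next, I would remark that the constraint $\Tr(\mD \mX) = 1$ becomes redundant: since $\Tr(\mD \mX) = \sum_i d_i X_{ii}$, the diagonal equality forces $\Tr(\mD \mX) = \sum_i d_i / \vol(G) = 1$ automatically. Hence \eqref{eq:mu_spectral_cut_sdp} at $\mu=1/2$ reduces to
\begin{equation*}
\begin{array}[t]{@{}ll@{}}
\underset{\mX \succeq 0}{\text{minimize}} & \Tr(\mL \mX) \\
\text{subject to} & \Tr(\vd \vd^\T \mX) = 0 \\
 & \Diag(\mX) = \tfrac{1}{\vol(G)}\ones.
\end{array}
\end{equation*}

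Then I would apply the change of variables $\mY = \vol(G)\, \mX$. This is a bijection on the PSD cone, and it takes the constraint $\Diag(\mX) = \ones/\vol(G)$ exactly to $\Diag(\mY) = \ones$, preserves $\Tr(\vd\vd^\T\mY) = 0$, and scales the objective linearly: $\Tr(\mL \mX) = \frac{1}{\vol(G)} \Tr(\mL \mY)$. Under this substitution, the reduced SDP above coincides, up to an overall scalar, with the minimum bisection SDP that defines $\mathcal{C}_G$. Specifically, minimizing $\Tr(\mL \mX)$ subject to the above constraints equals $\frac{1}{\vol(G)}$ times the minimum of $\Tr(\mL \mY)$ subject to the bisection SDP's constraints, which in turn is $\frac{4}{\vol(G)} \mathcal{C}_G$ since the bisection objective carries a factor of $1/4$.

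Combining, $\lambda_{1/2}^{\text{sdp}} = \tfrac{4}{\vol(G)} \mathcal{C}_G$, so multiplying both sides by $1/2$ gives the claim. There is no real obstacle here; the only thing to double-check is that nothing is lost when dropping $\Tr(\mD\mX)=1$ (it is implied, not merely consistent) and that the change of variables $\mY = \vol(G)\mX$ preserves positive semidefiniteness and the homogeneous linear constraint $\Tr(\vd\vd^\T\mY)=0$, both of which are immediate.
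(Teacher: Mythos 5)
Your proposal is correct and follows essentially the same route as the paper's proof: collapse the two diagonal inequalities to $\Diag(\mX) = \ones/\vol(G)$ at $\mu = 1/2$, observe that $\Tr(\mD\mX)=1$ is then implied, and rescale by $\vol(G)$ to match the minimum bisection SDP (accounting for its $1/4$ factor). You are also slightly more careful than the paper's exposition in writing the change of variables as $\mY = \vol(G)\,\mX$ (the paper's stated $\mX = \vol(G)\,\mY$ reads as a sign/direction slip) and in noting that $\Tr(\mD\mX)=1$ is implied rather than merely consistent.
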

\begin{proof}
    When \(\mu = \frac12\), we notice that the two inequality constraints
    \[\frac{\mu}{1 - \mu}\frac{\ones}{\vol(G)} \leq \Diag(\mX) \leq \frac{1 - \mu}{\mu}\frac{\ones}{\vol(G)}\]
    become the equality constraint \(\Diag(\mX) = \frac{\ones}{\vol(G)}\).
    Further, we can verify that \(\Tr(\mD \mX)\) = 1 is naturally satisfied when 
    \(\Diag(\mX) = \frac{\ones}{\vol(G)}\). Therefore the only difference is scaling. If we let \(\mX =  \vol(G) \mY\) and scale \(\mathcal{C}_G\) by \(\frac{4}{\vol(G)}\), we can see the two programs are exactly the same. Then we get \(\lambda_{1/2}^{\textup{sdp}} = \frac{4}{\vol(G)} \mathcal{C}_G\). 
\end{proof}

\subsection{A low-rank program for \texorpdfstring{$\mu$}{mu}-conductance}

The problem with \eqref{eq:mu_spectral_cut_sdp} is that it has $n^2$ variables,
which means the running time will be worse than $O(n^3)$ in most cases, and may
be $O(n^6)$ in the worst case. Thus it's difficult to get a high-precision solution on graphs with more than a few thousand nodes, which makes it impractical for graphs with tens or hundreds of thousand nodes. However, notice that this program only contains $O(n)$ constraints, thus this program admits an optimal solution with rank at most $O(\sqrt{n})$ \cite{lemon2016low}. This motivates us to change this program to a low-rank SDP formulation via Burer-Monterio method \cite{burer2003nonlinear}. 
Under some mild assumptions, the Burer-Monterio method has good optimality  
and convergence guarantee \cite{BVBNonconvex2016, CifBurer2021, cifuentes2022polynomial}. 
We factorize the positive semi-definite matrix $\mX$ into $\mY \mY^\T$ and introduce slack variables $\vs$  to simplify the inequality constraints to simple bounding box constraints. After these transformations, we arrive at the low-rank program
\begin{equation} \label{eq:lowrank_sdp}
\tag{\(\mu\)-Cond LRSDP}
   \lambda_\mu^{\text{lrsdp}} = \! \begin{array}[t]{@{}l@{\,\,}l@{\,\,}l@{}}
    \displaystyle \mathop{\text{minimize}}_{\mY \in \R^{n \times k}}
        & \Tr(\mY^\T \mL \mY) \\
   \text{subject to} & \Tr(\mY^\T \mD \mY) = 1 & (e) \\
    & \|\mY^\T \vd \|_F^2 = 0 & (f)\\
    &\Diag(\mY \mY^\T) + \vs = \frac{1 - \mu}{\mu\vol(G)} \ones & (g) \\
    &\vs \ge \zeros & (h) \\ 
    &\vs \leq \frac{1 - 2\mu}{\mu(1 - \mu)} \frac{\ones}{\vol(G)}. & (i) 
\end{array}
\end{equation}
Here $k$ is the rank parameter we can tune and we know if $k = \Omega(\sqrt{n})$, then $\lambda_\mu^{\text{lrsdp}} = \lambda_\mu^{\text{sdp}}$.

\subsection{Establishing an overall bound}
\label{sec:overall_bound}

However, the drawback of \eqref{eq:lowrank_sdp} is non-convexity, which makes it hard to be solved globally. Instead we consider 
the KKT points of \eqref{eq:lowrank_sdp}. Since \eqref{eq:lowrank_sdp} is not convex, satisfying KKT conditions of it is no longer sufficient for global optimality. But if we compare the KKT conditions of \eqref{eq:mu_spectral_cut_sdp} and \eqref{eq:lowrank_sdp} closely, we observe that the KKT points of \eqref{eq:lowrank_sdp} directly satisfy all KKT conditions of \eqref{eq:mu_spectral_cut_sdp} except one dual feasibility condition. And the violation of this condition characterizes how far the KKT points of the low-rank program is away from the optimum of the SDP. Formally, let $\lambda \in \R, \beta \in \R, \vgamma \in \R^n, \vg \in \R^n, \vl \in \R^n $ be Lagrangian multipliers corresponding to constraints $(e), (f), (g), (h), (i)$, then we have the following important observation.  
\begin{lemma} \label{lem:lowerbound}
    For a primal-dual pair $\mY^*, \vs^*, \lambda^*, \beta^*, \vgamma^*, \vg^*, \vl^*$ satisfying the KKT conditions of \eqref{eq:lowrank_sdp}, denote 
    \[
    \theta = -\min \{0, \lambda_{\text{min}} (\mL - \lambda^* \mD - \beta^* \vd \vd^\T - \Diag(\vgamma^*)) \}, 
    \] 
    then we have
    \[
        \Tr({\mY^*}^\T \mL \mY^*) - \theta \cdot \min \{1, \frac{(1 - \mu) n}{\mu \textup{Vol}(G)}\} \le \lambda_\mu^{\textup{sdp}}.        
    \]
\end{lemma}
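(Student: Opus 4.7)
My plan is a direct primal argument built on the matrix
\[
\mathcal{H}^* \;:=\; \mL - \lambda^* \mD - \beta^* \vd\vd^\T - \Diag(\vgamma^*)
\]
that appears in the stationarity condition of \eqref{eq:lowrank_sdp}: we have $\mathcal{H}^* \mY^* = 0$, and $\lambda_{\min}(\mathcal{H}^*) \ge -\theta$ by the definition of $\theta$. For any $\mX$ feasible in \eqref{eq:mu_spectral_cut_sdp}, rewriting $\mL = \mathcal{H}^* + \lambda^*\mD + \beta^*\vd\vd^\T + \Diag(\vgamma^*)$ and applying the SDP equalities $\Tr(\mD\mX)=1$ and $\Tr(\vd\vd^\T\mX)=0$ produces the identity
\[
\Tr(\mL\mX) \;=\; \Tr(\mathcal{H}^*\mX) \;+\; \lambda^* \;+\; \vgamma^{*T}\Diag(\mX).
\]
The task is to lower-bound the two non-trivial terms and recognize the result as $\Tr({\mY^*}^\T\mL\mY^*) - \theta\min\{1,\tfrac{(1-\mu)n}{\mu\vol(G)}\}$.

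For the first term, $\mX \succeq 0$ gives $\Tr(\mathcal{H}^*\mX) \ge -\theta\Tr(\mX)$. The trace bound has two sources: summing the upper box constraint gives $\Tr(\mX) \le \tfrac{(1-\mu)n}{\mu\vol(G)}$, while $\Tr(\mD\mX)=1$ combined with $d_i \ge 1$ (an unweighted, connected graph) gives $\Tr(\mX) \le 1$; together they produce the $\min$ factor. For the second term, I use the stationarity of \eqref{eq:lowrank_sdp} in the slack $\vs$, which expresses $\vgamma^*$ as a signed difference of the non-negative multipliers $\vg^*$ and $\vl^*$. Substituting the box constraints $\tfrac{\mu}{(1-\mu)\vol(G)}\ones \le \Diag(\mX) \le \tfrac{1-\mu}{\mu\vol(G)}\ones$ in the direction fixed by that sign pattern produces the lower bound $\lambda^* + \tfrac{\mu}{(1-\mu)\vol(G)}\ones^\T\vl^* - \tfrac{1-\mu}{\mu\vol(G)}\ones^\T\vg^*$ for the constant part of $\Tr(\mL\mX)$.

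To finish, I identify this constant as $\Tr({\mY^*}^\T\mL\mY^*)$ itself. Taking the trace of $\mathcal{H}^*\mY^*=0$ against ${\mY^*}^\T$ and using the LR-SDP primal feasibility of $\mY^*$ collapses the objective to $\lambda^* + \vgamma^{*T}\Diag(\mY^*{\mY^*}^\T)$. Complementary slackness for constraints $(h)$ and $(i)$ pins down $\Diag(\mY^*{\mY^*}^\T)$ at the KKT point: on the support of $\vg^*$ the slack $\vs^* = \zeros$ forces $\Diag(\mY^*{\mY^*}^\T)_i = \tfrac{1-\mu}{\mu\vol(G)}$, while on the support of $\vl^*$ the slack saturates and forces $\Diag(\mY^*{\mY^*}^\T)_i = \tfrac{\mu}{(1-\mu)\vol(G)}$. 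These active values collapse $\vgamma^{*T}\Diag(\mY^*{\mY^*}^\T)$ to exactly $\tfrac{\mu}{(1-\mu)\vol(G)}\ones^\T\vl^* - \tfrac{1-\mu}{\mu\vol(G)}\ones^\T\vg^*$, matching the constant obtained in the previous paragraph. Taking the infimum of $\Tr(\mL\mX)$ over SDP-feasible $\mX$ then yields the lemma.

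The main obstacle is sign bookkeeping. The direction of each box-constraint bound on $\vg^{*T}\Diag(\mX)$ and $\vl^{*T}\Diag(\mX)$ must align with the complementary-slackness-active endpoint at the LR-SDP KKT point, which is in turn dictated by the $\vs$-stationarity that gives $\vgamma^*$ its sign decomposition. Using the opposite endpoints degrades the primal lower bound by a non-negative gap of order $\tfrac{1-2\mu}{\mu(1-\mu)\vol(G)}(\ones^\T\vg^* + \ones^\T\vl^*)$ and breaks the identification with $\Tr({\mY^*}^\T\mL\mY^*)$; getting the alignment right is precisely what makes the bound tight up to the unavoidable $\theta$-perturbation.
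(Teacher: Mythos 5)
Your proof is correct and follows a genuinely different route from the paper's. The paper introduces a perturbed SDP with objective $\Tr((\mL+\theta\mI)\mX)$, shows that the primal-dual pair built from the LR-SDP KKT point satisfies \emph{all} KKT conditions of that perturbed program (hence is exactly optimal there), invokes strong duality for it, and finally bounds $\hat{\lambda}_\mu^{\text{sdp}} - \lambda_\mu^{\text{sdp}}$ by plugging the (achieved) optimizer of the original SDP into the perturbed objective and estimating its trace. Your argument sidesteps the perturbed SDP and strong duality entirely: you expand $\Tr(\mL\mX)$ against the LR-SDP multipliers for an \emph{arbitrary} SDP-feasible $\mX$ via $\mL = \mathcal{H}^* + \lambda^*\mD + \beta^*\vd\vd^\T + \Diag(\vgamma^*)$, lower-bound the residual term $\Tr(\mathcal{H}^*\mX)$ by $-\theta\Tr(\mX)$ using $\mX\succeq 0$, lower-bound $\vgamma^{*\T}\Diag(\mX)$ using $\vgamma^*=\vl^*-\vg^*$, $\vl^*,\vg^*\geq\zeros$, and the box constraints, and then match the resulting constant to $\Tr({\mY^*}^\T\mL\mY^*)$ by exploiting $\mathcal{H}^*\mY^*=\zeros$ together with the two complementary-slackness conditions that pin $\Diag(\mY^*{\mY^*}^\T)$ to $\tfrac{1-\mu}{\mu\vol(G)}$ on the support of $\vg^*$ and to $\tfrac{\mu}{(1-\mu)\vol(G)}$ on the support of $\vl^*$. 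Taking the infimum over feasible $\mX$ then gives the lemma. What each approach buys: yours is a pure weak-duality certificate, so it never needs a Slater point, attained optimum, or any strong-duality claim (only that the infimum $\lambda_\mu^{\text{sdp}}$ is well-defined); the paper's is conceptually useful if you want the stronger statement that the LR-SDP KKT point is \emph{exactly} optimal for a nearby SDP, which is also the viewpoint of \citet[Theorem 4]{BVBNonconvex2016} that the paper cites. Both arrive at the same trace bound $\Tr(\mX)\le\min\{1,\tfrac{(1-\mu)n}{\mu\vol(G)}\}$ from the same two constraints; the branch $\Tr(\mX)\le 1$ quietly uses $d_i\ge 1$, a shared (and harmless, since the $\min$ only needs the other branch otherwise) caveat with the paper's own proof.
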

Basically this Lemma states that if the dual variable $\mZ = \mL - \lambda \mD - \beta \vd \vd^\T - \Diag(\vgamma)$ is not positive semi-definite, then we can still lower bound the optimum of the SDP \eqref{eq:mu_spectral_cut_sdp} by subtracting a quantity related to this violation from the objective of \eqref{eq:lowrank_sdp}. 
To prove Lemma~\ref{lem:lowerbound}, we need to first make a few important observations. 

By introducing slack variables $\vs$, we have the following SDP relaxation of \eqref{eq:mu_spectral_cut} which is equivalent to \eqref{eq:mu_spectral_cut_sdp}. 
\begin{equation*}
\begin{array}[t]{@{}lll@{}}
\lambda_\mu^{\text{sdp}} = &\text{minimize} & \Tr(\mL \mX)  \label{eq:musdp} \\
  & \text{subject to} & \Tr(\mD \mX) = 1  \\
  &  &\Tr(\vd \vd^\T \mX) = 0 \\
  &  &\Diag(\mX) + \vs = \frac{1 - \mu}{\mu}\frac{\ones}{\vol(G)} \\
  &  &\zeros \leq \vs \le \frac{1-2\mu}{\mu(1 - \mu)}\frac{\ones}{\vol(G)}  \\
  &  &\mX \succeq 0.
\end{array}
\end{equation*}
Its Lagrangian dual is
\begin{equation}
\tag{$\mu$-Cond SDD} \label{eq:musdd} 
\begin{array}[t]{@{}lll@{}}
\lambda_\mu^{\text{sdd}} = &\underset{\lambda, \beta, \vgamma, \vg, \vl, \mZ }{\text{maximize}} & \lambda + \frac{1 - \mu}{\mu \vol(G)} \vgamma^\T \ve - \frac{1 - 2\mu}{\mu(1 - \mu)\vol(G)} \vl^\T \ve   \\
  & \text{subject to} &\mL - \lambda \mD - \beta \vd \vd^\T -\Diag(\vgamma) - \mZ = \zeros   \\
  &  &\vl - \vg - \vgamma = \zeros \\
  &  &\vl \geq \zeros \\
  &  &\vg \geq \zeros  \\
  &  &\mZ \succeq 0.
\end{array}
\end{equation}
They have the following relation.
\begin{lemma}
    Strong duality holds between \eqref{eq:mu_spectral_cut_sdp} and \eqref{eq:musdd}, in other words, $\lambda_\mu^{\text{sdp}} = \lambda_\mu^{\text{sdd}}$, and the optimum of \eqref{eq:mu_spectral_cut_sdp} is achieved.
\end{lemma}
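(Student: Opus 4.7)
The plan is to invoke the classical Slater-type strong duality theorem for semi-definite programs: if the dual \eqref{eq:musdd} is strictly feasible and the primal \eqref{eq:mu_spectral_cut_sdp} is feasible, then $\lambda_\mu^{\text{sdp}} = \lambda_\mu^{\text{sdd}}$ and the primal optimum is attained. The proof thus reduces to exhibiting a primal feasible point together with a strictly feasible dual point.

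For the primal, take any set $T$ with $\mu \vol(G) \le \vol(T) \le \vol(G)/2$ (such a $T$ exists whenever $\mu \le 1/2$; one may even take a minimizer of $\phi_\mu$) and set $\mX = \vpsi_T \vpsi_T^\T$ with $\vpsi_T$ as in the proof of Lemma~\ref{lem:spectral-mu-cond-lower}. That lemma's computation already yields $\Tr(\mD \mX) = 1$ and $\Tr(\vd \vd^\T \mX) = 0$. Each diagonal entry of $\mX$ equals either $\vol(\bar T)/(\vol(T)\vol(G))$ or $\vol(T)/(\vol(\bar T)\vol(G))$, and the volume assumption on $T$ combined with $\mu \le 1/2$ sandwiches both values inside $[\,\mu/((1-\mu)\vol(G)),\; (1-\mu)/(\mu \vol(G))\,]$. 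Adjoining the slack $\vs = \frac{1-\mu}{\mu \vol(G)} \ve - \Diag(\mX) \ge \zeros$ then produces a primal feasible pair $(\mX, \vs)$.

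For strict dual feasibility I exhibit a point by inspection: take $\lambda = 0$, $\beta = 0$, $\vgamma = -c \ve$, $\vl = \epsilon \ve$, $\vg = (c+\epsilon)\ve$, and $\mZ = \mL + c \mI$ for any $c, \epsilon > 0$. Direct substitution shows the two equality constraints of \eqref{eq:musdd} hold: the first reduces to $\mL + c \mI - \mZ = \zeros$, and the second to $\epsilon \ve - (c+\epsilon)\ve + c \ve = \zeros$. Since $\mL \succeq 0$ we have $\mZ = \mL + c \mI \succ 0$, and clearly $\vl, \vg > \zeros$ entrywise, so the point sits strictly in the interior of every cone. Combining these two ingredients with the classical theorem closes the proof. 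I do not anticipate any real obstacle beyond the routine sandwich calculation for the primal diagonal bounds, which is a direct consequence of $\mu \vol(G) \le \vol(T) \le \vol(G)/2$ together with $\mu \le 1/2$.
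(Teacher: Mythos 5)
Your approach is the same as the paper's: invoke the standard SDP strong duality theorem by exhibiting a strictly feasible dual point. Your choice $\lambda=0$, $\beta=0$, $\vgamma=-c\ve$, $\vl=\epsilon\ve$, $\vg=(c+\epsilon)\ve$, $\mZ=\mL+c\mI$ is valid and arguably cleaner than the paper's ($\lambda=-1$, $\beta=-1$, $\vgamma=\ones$, $\vl=2\ones$, $\vg=\ones$, giving $\mZ = \mL+\mD+\vd\vd^\T-\mI$), since $\mL+c\mI\succ 0$ is immediate from $\mL\succeq 0$ whereas the paper's $\mZ$ requires a short argument involving the minimum degree and connectivity.

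You go further than the paper by also checking primal feasibility (the paper only states the dual Slater point and cites standard SDP duality). That extra care is appropriate, but the parenthetical claim that ``such a $T$ exists whenever $\mu\le 1/2$'' is false in general: take $G$ to be a triangle, so every vertex has degree $2$, $\vol(G)=6$, and set volumes lie in $\{0,2,4,6\}$. For $\mu=0.4$ the window $[\mu\vol(G),\vol(G)/2]=[2.4,3]$ contains no achievable volume, so no valid $T$ exists and $\phi_\mu(G)=+\infty$. The SDP, however, is still feasible there (e.g.\ $\mX=\tfrac14\mI-\tfrac{1}{12}\ve\ve^\T$ works), so set-indicator feasible points are not the right witnesses. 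A uniform fix is to observe that $\mu\le 1/2$ gives $\tfrac{\mu}{1-\mu}\le 1\le\tfrac{1-\mu}{\mu}$, so any PSD $\mX$ with constant diagonal $\tfrac{1}{\vol(G)}$ automatically satisfies both box constraints and $\Tr(\mD\mX)=1$; writing $\mX=\tfrac{1}{\vol(G)}\mG$ for a correlation matrix $\mG$ with $\vd^\T\mG\vd=0$ reduces the question to placing unit Gram vectors $\vu_i$ with $\sum_i d_i\vu_i=\zeros$, which is possible precisely because $d_i\le\vol(G)/2$ for every $i$ in a loop-free undirected graph. This closes the gap without appealing to the existence of a volume-balanced vertex set.
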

\begin{proof}
    This is a standard SDP duality claim (for example see \citet{VB96sdp}) implied by the fact that 
    \eqref{eq:musdd} has a strictly feasible solution $\lambda = -1, \beta = -1, \vgamma = \ones, \vl = 2\ones, \vg = \ones$.
\end{proof}
Observe that the objective and all constraints of \eqref{eq:mu_spectral_cut_sdp} are affine with regard to variables $\mX$ and $\vs$, so the KKT conditions are \emph{sufficient} for optimality (see Section~5.5.3 of \citet{boyd2004convex} for example). 

\begin{lemma}
  \label{lem:kktsdp}
  The following KKT conditions are sufficient for a primal-dual pair $\mX^*, \vs^*$ and $\lambda^*$, $\beta^*$, $\vgamma^*$, $\vg^*$, $\vl^*$, $\mZ^*$
  to be an optimal solution.
  The primal feasibility conditions are  
\begin{equation}
\begin{array}[t]{@{}l@{}}
    \Tr(\mD\mX^*) = 1 \\
    \Tr(\vd\vd^\T \mX^*) = 0 \\
    \Diag(\mX^*) + \vs^* = \frac{1 - \mu}{\mu \vol(G)} \ve \\
           \zeros \leq \vs^* \leq \frac{1 - 2\mu}{\mu(1 - \mu)\vol(G)} \ones \\
           \mX^* \succeq 0,
\end{array}
\end{equation}
and dual feasibility conditions are
\begin{equation}
\begin{array}[t]{@{}l@{}}
      \mL - \lambda^* \mD - \beta^* \vd \vd^\T -\Diag(\vgamma^*) - \mZ^* = \zeros\\
             \vl^* - \vg^* - \vgamma^* = \zeros \\
             \vl^* \geq \zeros \\
             \vg^* \geq \zeros \\
             \mZ^* \succeq 0, 
\end{array}
\end{equation}
  and the complementary slackness conditions are  
\begin{equation}
\begin{array}[t]{@{}l@{}}
     {\vg^*}^\T \vs^* = 0  \\
     {\vl^*}^\T (\frac{1 - 2\mu}{\mu(1 - \mu)\vol(G)} \ones  - \vs^*) = 0 \\
     \Tr(\mX^*\mZ^*) = 0.
\end{array}
\end{equation}
\end{lemma}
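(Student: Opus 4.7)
The plan is to verify the general principle that, for a convex program satisfying a constraint qualification, any KKT point is a primal optimum, and to spell out exactly what ``convex'' and ``Slater'' mean for the slacked SDP reformulation of \eqref{eq:mu_spectral_cut_sdp}. First I would observe that the slacked SDP has a linear objective $\Tr(\mL \mX)$, affine equality constraints in $(\mX,\vs)$, and convex cone constraints $\mX \succeq 0$ and $\zeros \le \vs \le \tfrac{1-2\mu}{\mu(1-\mu)\vol(G)}\ones$. Hence the problem is a convex conic program, so standard convex-optimization theory (e.g. Boyd--Vandenberghe, Section~5.5.3) applies once a constraint qualification holds. Strict feasibility of the dual was already noted in the text; I would also note, for completeness, that a strictly feasible primal point can be obtained by taking $\mX = \tfrac{1}{\vol(G)} \mY_0$ for a suitably chosen positive definite $\mY_0$ orthogonal to $\vd$ with appropriate diagonal, giving Slater's condition on the primal side as well.

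Next I would write the Lagrangian of the slacked SDP with the multipliers $\lambda, \beta, \vgamma, \vg, \vl, \mZ$ exactly as named in the lemma:
\begin{equation*}
\begin{aligned}
\mathcal{L}(\mX,\vs;\lambda,\beta,\vgamma,\vg,\vl,\mZ)
&= \Tr(\mL \mX) - \lambda(\Tr(\mD \mX) - 1) - \beta \Tr(\vd\vd^\T \mX) \\
&\quad - \vgamma^\T\!\bigl(\Diag(\mX) + \vs - \tfrac{1-\mu}{\mu\vol(G)}\ones\bigr) - \vg^\T \vs \\
&\quad + \vl^\T\!\bigl(\vs - \tfrac{1-2\mu}{\mu(1-\mu)\vol(G)}\ones\bigr) - \Tr(\mZ \mX).
\end{aligned}
\end{equation*}
Grouping terms that multiply $\mX$ and $\vs$, the coefficient of $\mX$ is $\mL - \lambda\mD - \beta\vd\vd^\T - \Diag(\vgamma) - \mZ$ and the coefficient of $\vs$ is $\vl - \vg - \vgamma$. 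The stated dual feasibility conditions force both coefficients to vanish, so at any KKT point the Lagrangian reduces, as a function of $(\mX,\vs)$, to the constant $\lambda + \tfrac{1-\mu}{\mu\vol(G)}\vgamma^\T\ve - \tfrac{1-2\mu}{\mu(1-\mu)\vol(G)}\vl^\T\ve$, which is exactly the dual objective in \eqref{eq:musdd}.

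With this in hand, the main step is the standard weak-duality sandwich. For any primal feasible $(\mX,\vs)$ and any dual feasible $(\lambda^*,\beta^*,\vgamma^*,\vg^*,\vl^*,\mZ^*)$,
\begin{equation*}
\Tr(\mL \mX) \ge \mathcal{L}(\mX,\vs;\lambda^*,\beta^*,\vgamma^*,\vg^*,\vl^*,\mZ^*),
\end{equation*}
because every constraint-violation term added to $\Tr(\mL\mX)$ is nonpositive on the primal feasible set (equalities contribute zero; $-\Tr(\mZ^*\mX) \le 0$ since both are PSD; $-\vg^{*\T}\vs \le 0$ and $\vl^{*\T}(\vs - \tfrac{1-2\mu}{\mu(1-\mu)\vol(G)}\ones) \le 0$). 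Plugging in the KKT candidate $(\mX^*,\vs^*)$ on the left and using the three complementary slackness identities $\vg^{*\T}\vs^* = 0$, $\vl^{*\T}(\tfrac{1-2\mu}{\mu(1-\mu)\vol(G)}\ones - \vs^*) = 0$, and $\Tr(\mX^*\mZ^*)=0$, every slack term vanishes, so $\Tr(\mL\mX^*)$ equals the dual objective evaluated at the KKT multipliers. Combined with the inequality above applied to an arbitrary feasible $(\mX,\vs)$, this yields $\Tr(\mL\mX^*) \le \Tr(\mL\mX)$, which is exactly optimality of $(\mX^*,\vs^*)$.

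The only potential obstacle is a bookkeeping one, namely making sure the Lagrangian is written with the signs and multiplier sign conventions that match the inequality directions in \eqref{eq:musdd}; everything else is routine convex duality once strong duality is in force (which the paper has already invoked). I would therefore keep the write-up short, leaning on Boyd--Vandenberghe Section~5.5.3 to handle the ``gradient of $\mathcal{L}$ vanishes implies $(\mX^*,\vs^*)$ minimizes $\mathcal{L}(\,\cdot\,;\text{duals})$'' step, since the Lagrangian is affine in $(\mX,\vs)$ and that minimization is trivial once the linear coefficients vanish.
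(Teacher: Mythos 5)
Your proposal is correct and is essentially the paper's argument spelled out: the paper proves this lemma simply by observing that the slacked program is convex (affine objective and constraints plus the cone constraints) and citing Boyd--Vandenberghe \S 5.5.3, and your weak-duality sandwich with the Lagrangian constant equal to the dual objective is exactly the standard argument behind that citation. One caution: your ``for completeness'' aside about primal Slater is false --- since $\Tr(\vd\vd^\T\mX)=\vd^\T\mX\vd=0$ together with $\mX\succeq 0$ forces $\mX\vd=\zeros$, no feasible $\mX$ can be positive definite --- but this is harmless, because KKT sufficiency for a convex program needs no constraint qualification, as your own sandwich argument (which never uses strict feasibility) demonstrates.
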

Note that the stationarity conditions of program \eqref{eq:mu_spectral_cut_sdp} is a subset of the
dual feasibility conditions, so we do not list them out.

Recall that in the low-rank SDP we propose, we just factorize $\mX$ into $\mY \mY^\T$, so it is intuitive it has a strong connection with \eqref{eq:mu_spectral_cut_sdp}. In fact, 
it turns out, for a primal-dual pair $\mY^*, \vs^*$ and $\lambda^*$, $\beta^*$, $\vgamma^*$, $\vg^*$, $\vl^*$ satisfying the KKT conditions of \eqref{eq:lowrank_sdp}, let 
\begin{align*}
\mX^* &= \mY^* {\mY^*}^\T \\
\mZ^* &= \mL - \lambda^* \mD - \beta^* \vd \vd^\T -\Diag(\vgamma^*)
\end{align*}
then $\mX^*, \vs^*$ and $\lambda^*, \beta^*, \vgamma^*, \vg^*, \vl^*, \mZ^*$ are a primal-dual pair which almost satisfies all KKT conditions of \eqref{eq:musdd}, except  
\[\mZ^* \succeq 0.\]

It's easy to verify the claim above because we have the following fact. 
\begin{lemma}
For a primal-dual pair $\mY^*, \vs^*$ and $\lambda^*$, $\beta^*$, $\vgamma^*$, $\vg^*$, $\vl^*$ to satisfy all KKT conditions of \eqref{eq:lowrank_sdp}, they need to satisfy the stationarity conditions
\begin{equation}
\begin{array}[t]{@{}l@{}}
(\mL - \lambda^* \mD - \beta^* \vd \vd^\T - \Diag(\vgamma^*)) \mY^*  = \zeros \\
\vl^* - \vgamma^* - \vg^* = \zeros,
\end{array}
\end{equation}
and primal feasibility conditions
\begin{equation}
\begin{array}[t]{@{}l@{}} \Tr({\mY^*}^\T \mD\mY^*) = 1   \\
  \Tr( \vd \vd^\T \mY^* {\mY^*}^\T) = 0 \\
  \Diag(\mY^* {\mY^*}^\T) + \vs^* = \frac{1 - \mu}{\mu \vol(G)} \ve \\
  \zeros \leq \vs^* \leq \frac{1 - 2\mu}{\mu(1 - \mu)\vol(G)} \ones,
\end{array}
\end{equation}
and dual feasibility conditions
\begin{equation}
\begin{array}[t]{@{}l@{}}
    \vl^* \geq \zeros \\
    \vg^* \geq \zeros,
\end{array}
\end{equation}
  and the complementary slackness conditions  
\begin{equation}
\begin{array}[t]{@{}l@{}}
    {\vg^*}^\T \vs^* = 0  \\
    {\vl^*}^\T (\frac{1 - 2\mu}{\mu(1 - \mu)\vol(G)} \ones  - \vs^*) = 0.
\end{array}
\end{equation}
\end{lemma}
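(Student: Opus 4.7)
The plan is to derive these KKT conditions directly from the Lagrangian of \eqref{eq:lowrank_sdp}, since the statement is essentially a catalog of what ``KKT point'' means for this specific program. I associate a scalar multiplier $\lambda$ with the trace-normalization (e), a scalar $\beta$ with the quadratic equality (f), a vector $\vgamma \in \R^n$ with the $n$ diagonal-plus-slack equalities (g), and vectors $\vg, \vl \in \R^n$ with the inequality constraints (h) and (i). Writing out the Lagrangian $L(\mY, \vs, \lambda, \beta, \vgamma, \vg, \vl)$ is then a direct transcription of the program.

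The stationarity computations are mechanical. Using $\nabla_\mY \Tr(\mY^\T \mM \mY) = 2\mM\mY$ for symmetric $\mM$, I differentiate $\Tr(\mY^\T \mL \mY)$, $\lambda \Tr(\mY^\T \mD \mY)$, and $\beta \, \vd^\T \mY \mY^\T \vd$ to obtain $2\mL\mY$, $2\lambda\mD\mY$, and $2\beta\vd\vd^\T\mY$. The only subtle piece is the diagonal coupling: since $[\mY\mY^\T]_{ii}$ depends only on the $i$-th row of $\mY$, one has $\nabla_\mY \vgamma^\T \Diag(\mY\mY^\T) = 2\,\Diag(\vgamma)\,\mY$, which is what produces the diagonal multiplier rather than a dense symmetric update. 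Setting $\nabla_\mY L = \zeros$ and dividing by $2$ yields $(\mL - \lambda^* \mD - \beta^* \vd\vd^\T - \Diag(\vgamma^*))\mY^* = \zeros$, and differentiating in $\vs$ gives $\vl^* - \vgamma^* - \vg^* = \zeros$, exactly matching the stated stationarity block.

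The remaining blocks are immediate once the Lagrangian is in hand. Primal feasibility just restates (e)--(i) applied to $(\mY^*, \vs^*)$. Dual feasibility requires the two inequality multipliers to be non-negative, $\vl^*, \vg^* \geq \zeros$. Complementary slackness pairs each inequality with its multiplier componentwise, yielding ${\vg^*}^\T \vs^* = 0$ and ${\vl^*}^\T(\tfrac{1-2\mu}{\mu(1-\mu)\vol(G)}\ones - \vs^*) = 0$.

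There is no real obstacle; the lemma is a careful bookkeeping exercise. The single caveat worth flagging is that because \eqref{eq:lowrank_sdp} is non-convex in $\mY$, these conditions are only \emph{necessary} (under a standard constraint qualification such as LICQ at the candidate point) and not sufficient for global optimality. That gap is precisely what Lemma~\ref{lem:lowerbound} then repairs by building $\mX^* = \mY^*{\mY^*}^\T$ and $\mZ^* = \mL - \lambda^*\mD - \beta^*\vd\vd^\T - \Diag(\vgamma^*)$ and matching all SDP-KKT conditions of \eqref{eq:mu_spectral_cut_sdp} except $\mZ^* \succeq 0$, whose violation is absorbed by the $\theta$ correction.
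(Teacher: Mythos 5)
Your derivation is correct and is precisely the standard Lagrangian bookkeeping that the paper leaves implicit --- the paper states this lemma without proof, since it is simply a transcription of what ``KKT point of \eqref{eq:lowrank_sdp}'' means. The sign conventions you chose (minus signs on the equality multipliers $\lambda, \beta, \vgamma$, $-\vg^\T\vs$ for the lower bound, $+\vl^\T(\vs - c\ones)$ for the upper bound) are consistent with what must hold for the stated dual feasibility $\vl^*, \vg^* \geq \zeros$, the stationarity $\vl^* - \vgamma^* - \vg^* = \zeros$ from $\nabla_{\vs}L = 0$, and the two complementary slackness conditions; and the key gradient identity $\nabla_\mY \vgamma^\T \Diag(\mY\mY^\T) = 2\Diag(\vgamma)\mY$ is exactly the piece that makes the diagonal multiplier appear as $\Diag(\vgamma^*)$ rather than a dense correction. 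Your closing caveat --- that these conditions are only necessary under a constraint qualification and are not sufficient for global optimality in this non-convex program, which is why Lemma~\ref{lem:lowerbound} exists --- is also an accurate reading of the paper's structure.
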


Therefore if $\mZ^* \succeq 0$ is violated, the objective of \eqref{eq:lowrank_sdp} at KKT points may deviate from \(\lambda_{\mu}^{\text{sdp}}\). 

However, we observe that we can bound this deviation by the violation extent of $\mZ^* \succeq 0$. 

Denote
\[
\theta = -\min \{0, \lambda_{\text{min}} (\mL - \lambda^* \mD - \beta^* \vd \vd^\T - \Diag(\vgamma^*)) \}.
\]
If $\theta = 0$, then all KKT conditions of \eqref{eq:mu_spectral_cut_sdp} are satisfied, which means $\mY^*, \vs^*$ achieves global optimality. 

If $\theta > 0$, we consider the following perturbed variant for \eqref{eq:mu_spectral_cut_sdp}.
      \begin{equation}
\begin{array}[t]{@{}lll@{}}
\hat{\lambda}_\mu^{\text{sdp}} = &\text{minimize} & \Tr((\mL + \theta\mI) \mX) \tag{Perturbed $\mu$-Cond SDP} \label{eq:permusdp} \\
  & \text{subject to} & \Tr(\mD \mX) = 1  \\
  &  &\Tr(\vd \vd^\T \mX) = 0 \\
  &  &\Diag(\mX) + \vs = \frac{1 - \mu}{\mu}\frac{\ones}{\vol(G)} \\
  &  &\zeros \leq \vs \le \frac{1-2\mu}{\mu(1 - \mu)}\frac{\ones}{\vol(G)}  \\
  &  &\mX \succeq 0.
\end{array}
\end{equation}
Basically we add $\theta$ into objective and keep feasible region unchanged. 

Denote its dual optimum by \(\hat{\lambda}_{\mu}^{\text{sdd}}\), we can similarly show that strong duality holds, in other words \(\hat{\lambda}_{\mu}^{\text{sdp}} = \hat{\lambda}_{\mu}^{\text{sdd}}\) and \(\hat{\lambda}_{\mu}^{\text{sdp}}\) is achieved.

Now we are ready to prove Lemma~\ref{lem:lowerbound}.
\begin{proof}[Proof of Lemma~\ref{lem:lowerbound}]
For a primal-dual pair $\mY^*, \vs^*$ and $\lambda^*$, $\beta^*$, $\vgamma^*$, $\vg^*$, $\vl^*$ satisfying all KKT conditions of \eqref{eq:lowrank_sdp}, let
\begin{align*}
\mX^* &= \mY^* {\mY^*}^\T \\
\mZ^* &= \mL + \theta \mI - \lambda^* \mD - \beta^* \vd \vd^\T -\Diag(\vgamma^*), 
\end{align*}
then the variables $\mX^*, \vs^*$ and multipliers $\lambda^*, \beta^*, \vgamma^*, \vg^*, \vl^*, \mZ^*$ satisfy all the KKT conditions of \eqref{eq:permusdp} but the following complementary slackness condition is violated
    \[\Tr(\mZ^* \mX^*) = 0, \]
instead we have
    \[\Tr(\mZ^* \mX^*) = \theta \Tr(\mX^*). \]
Since all other conditions are satisfied, we know 
the dual value at this point is 
\[\Tr\bigl((\mL + \theta \mI) \mX^*\bigr) - \Tr(\mZ^* \mX^*) = \Tr(\mL^* \mX^*).\] 
Thus we know 
\[
\Tr(\mL^* \mX^*) \le \hat{\lambda}_{\mu}^{\text{sdd}} = \hat{\lambda}_{\mu}^{\text{sdp}},
\]
which means the objective value at a KKT point of \eqref{eq:lowrank_sdp} is actually 
upper bounded by the optimum of the perturbed SDP \eqref{eq:permusdp}. 

Indeed, we are able to bound the gap between \(\hat{\lambda}_{\mu}^{\text{sdp}}\) 
and \({\lambda}_{\mu}^{\text{sdp}}\).
 Assume \(\mX_{\text{opt}}, \vs_{\text{opt}}\) achieves
the optimum of \eqref{eq:mu_spectral_cut_sdp}. Because the feasible region of \eqref{eq:permusdp} is same with that of \eqref{eq:mu_spectral_cut_sdp},
we know that
\begin{align*}
\hat{\lambda}_{\mu}^{\text{sdp}} \le \Tr((\mL + \theta \mI) \mX_{\text{opt}}) = \lambda_{\mu}^{\text{sdp}} + \theta \cdot \Tr(\mX_{\text{opt}}) \le \lambda_{\mu}^{\text{sdp}} + \theta \cdot \min \{1, \frac{(1 - \mu) n}{\mu \vol(G)}\}, 
\end{align*}
where the last inequality is due to the fact that $\Tr(\mD \mX_{\text{opt}}) = 1$ and $\Diag(\mX_{\text{opt}}) \le \frac{1 - \mu}{\mu \vol(G)} \ones$.

Therefore piecing all things together, we get 
\[
\Tr(\mL^* \mX^*) \le \hat{\lambda}_{\mu}^{\text{sdp}} \le \lambda_{\mu}^{\text{sdp}} + \theta \cdot \min \{1, \frac{(1 - \mu) n}{\mu \vol(G)}\}.
\]
\end{proof}

We remark that the gap $\theta \cdot \min \{1, \frac{(1 - \mu) n}{\mu \vol(G)}\}$
has the potential to be further tightened to get a better posterior bound. The intuition is that assuming $\mX_{\text{opt}} = \mX^*$, then we can turn it into 
$\theta \cdot \Tr(\mX^*)$ where $\mX^*$ is what we know because 
it is $\mY^* {\mY^*}^\T$ and $\mY^*$ is the solution returned by our augmented Lagrangian method. In general, whenever there is some non-trivial relation between trace of $\mX^*$ and $\mX_{\text{opt}}$, we can get a non-trivial tighter bound. We also note that in a further literature review, we found that Lemma~\ref{lem:lowerbound} can be derived from \citet[Theorem 4]{BVBNonconvex2016}.

Summing up all the Lemmas we get, we now have 
\begin{align*}
  \frac12(\Tr({\mY^*}^\T \mL \mY^*) - \theta \cdot \min \{1, \frac{(1 - \mu) n}{\mu \vol(G)}\}) 
 \leq \frac12 \lambda_\mu^{\text{sdp}} 
\leq \frac12 \lambda_\mu
\leq \phi_\mu(G).
\end{align*}
 This concludes the proof of Theorem~\ref{thm:main}. 

\section{Methods}
\label{sec:methods}

In order to solve the non-convex low-rank SDP~\eqref{eq:lowrank_sdp}, we use an augmented Lagrangian approach. The augmented Lagrangian method is an iterative algorithm where in each iteration we minimize a
function including the original objective, the estimated Lagrangian multipliers,
 and the penalty term which drives the solution into feasible region. It has been shown in practice that the augmented Lagrangian method achieves good performance in 
solving low-rank SDP problems \cite{burer2003nonlinear}.

Let $\sigma$ be the coefficient for the penalty term and $\lambda, \beta, \vgamma$ be the Lagrangian multipliers defined in Section~\ref{sec:overall_bound}. The augmented Lagrangian for \eqref{eq:lowrank_sdp} without the bounding box constraint $(h)$ and $(i)$ is
\begin{align*}
 & \mathcal{L}_A(\mY, \vs ; \lambda, \beta, \vgamma,  \sigma) \\
 &\quad = \Tr(\mY^\T \mL \mY) - \lambda(\Tr(\mY^\T \mD \mY) - 1) - \beta (\vd^\T \mY\mY^\T \vd) \\  
 &\qquad - \vgamma^\T (\textstyle \Diag(\mY \mY^\T) + \vs - \frac{(1 - \mu)}{\mu}  \frac{\ones}{\vol(G)} ) \\
 &\qquad + \frac{\sigma}{2} \left( \vphantom{\frac{\mu}{1 - \mu}}  (\Tr(\mY^\T \mD \mY) - 1)^2 + (\vd^\T \mY \mY^\T \vd)^2 \right. \\
 &\qquad \qquad \left. {} + \normof{\textstyle \Diag(\mY \mY^T) + \vs - \frac{(1 - \mu)}{\mu} \frac{\ones}{\vol(G)}}_2^2  \right).
\end{align*}
In each iteration, we solve the following subproblem
\begin{equation} 
\label{eq:alm-subprob}
    \begin{array}{ll}
  {\displaystyle \mathop{\text{minimize}}_{\mY, \vs}} &\mathcal{L}_A(\mY, \vs; \lambda, \beta, \vgamma, \sigma) \\[1ex]
   \text{subject to} & \zeros \leq \vs \leq \frac{1 - 2\mu}{\mu(1 - \mu) } \frac{\ones}{\vol(G)}\\
\end{array}
\end{equation}
using a Limited-Memory BFGS method with bound constraints on variables \cite{byrd1995limited}. Since L-BFGS-B is a quasi-Newton Method, it requires 
the gradient of $\mathcal{L}_A$ with regard to variables $\mY$ and $\vs$. Let 
\begin{align*}
  \vu = \Diag(\mY \mY^\T) + \vs - \frac{\mu}{(1 - \mu) \vol(G)} \ones,   
\end{align*}
we have
\begin{align*}
    \nabla_{\mY} \mathcal{L}_A &= 2 \mL \mY - 2 (\lambda - \sigma (\Tr(\mY^\T \mD \mY) - 1)) \mD \mY \\ 
    &- 2 ( \beta - \sigma \vd^\T \mY \mY^\T \vd) \vd \vd^\T \mY  \\
    &- 2 \left((\vgamma - \sigma \vu ) \ve^\T \right) \circ \mY, \\
    \nabla_{\vs} \mathcal{L}_A &= - \vgamma + \sigma \vu
\end{align*}
where $\circ$ is the element-wise or Hadamard product.

After each solve, we update the multipliers and penalty parameters following Alg~17.4 of~\citet{NW99numopt}. 

\smallparagraph{Initialization and the rank parameter $k$.}
As L-BFGS-B is a quasi-Newton method,
convergence is faster when the starting point is close to the optimal solution. 
We initialize $\mY$ by the $k$ eigenvectors corresponding to the $k$ smallest non-zero eigenvalues
of normalized Laplacian $\mD^{-1/2} \mL \mD^{-1/2}$. This is based on the observation that
when $k = 1$, program \eqref{eq:lowrank_sdp} degenerates to program \eqref{eq:mu_spectral_cut} 
and the Fiedler vector remains the optimal solution for small $\mu$.


\smallparagraph{Comparison against SDP solvers.}
For small enough problems, we can solve both the SDP \eqref{eq:mu_spectral_cut_sdp} as well as the low-rank SDP~\eqref{eq:lowrank_sdp}. 
Therefore we compare our LRSDP with them on 
two small synthetic graphs with 85 and 537 vertices. 
We intentionally construct the  two synthetic graphs with a dense core that has minimal conductance and localizes the Fiedler vector. (See Figure~\ref{fig:geometric} and discussion of the construction in Appendix~\ref{app:synthetic-graphs}.) 
We compare the solvers for different $\mu$s on each graph and the results are summarized in Table~\ref{tab:lrsdp_vs_sdp}. These show that our LRSDP has objective values extremely close to solving the SDPs directly and is \emph{much} faster.

\begin{fullwidthtable}[t]
\centering
\begin{sc}
\begin{small}
\caption{To validate that the LRSDP~\eqref{eq:lowrank_sdp} and SDP~\eqref{eq:mu_spectral_cut_sdp} are similar on problems where we can compute both, we examine their objective values on two small synthetic graphs (e.g.~Figure~\ref{fig:geometric}). We choose two established SDP solvers, SCS \cite{SCS} and Mosek \cite{mosek}.
 This shows that LRSDP gives nearly identical results and is much faster. Here LB stands for the lower bound provided by our low-rank program, which theoretically should be a lower bound for objective value of all SDP solutions. Empirically some objective values are lower than this bound because numerically they do not strictly satisfy all primal feasibility conditions. } 
\begin{tabularx}{\linewidth}{lllXXXXXXX}
\toprule 
Nodes & Edges & $\mu$ & \smash{\rlap{Objective~Value}} & & & Bound &  Time & & \\
 \cmidrule(l){4-6}
 \cmidrule(l){7-7}
 \cmidrule(l){8-10}
 & & & LRSDP & SCS & MOSEK & LB & LRSDP & SCS & MOSEK  \\
 \midrule
\multirow{3}{*}{85} & \multirow{3}{*}{193} 
& 0.01 & 0.004407 & 0.004407 & 0.004406 & 0.004398 & 0.7s & 16.7s & 5.3s \\
& & 0.05 & 0.004510 & 0.004511 & 0.004508 & 0.004499 & 2.1s & 18.4s & 4.9s \\
& & 0.25 & 0.007318 & 0.007223 & 0.007314 & 0.007292 &  1.8s & 18.2s & 6.0s \\
\midrule 
\multirow{4}{*}{537} & \multirow{4}{*}{1327} 
&  0.01 & 0.001092 & 0.001089  & 0.001081 & 0.001083 & 17.8s & 1.6 hrs  & 16.9 hrs \\
& & 0.03 & 0.001115 & 0.001113 & 0.001092 & 0.001056 & 17.3s & 12.2 hrs & 15.6 hrs \\
& & 0.1 & 0.001444 & 0.001440 & 0.001428 & 0.001390 & 21.0s  & 56.7 mins  & 13.4 hrs \\
& & 0.3 & 0.002733 & 0.002732 & 0.002731 & 0.002720 & 11.8s  & 1.8 hrs  & 18.8 hrs \\
 \bottomrule 
\end{tabularx}

\label{tab:lrsdp_vs_sdp}
\end{small}
\end{sc}
\end{fullwidthtable}

\smallparagraph{Non-monotonic results.}
The results from $\mu$-conductance must be monotonic. That is, for $\mu_1 \ge \mu_2$, we must have $\phi_{\mu_1} \ge \phi_{\mu_2}$ by set inclusion properties of the $\mu$-conductance function. Because we have a lower bound, we found scenarios where the lower bounds were not monotonically increase in $\mu$. Since our investigations typically involve multiple values of $\mu$, we simply adjust the bounds to reflect the \emph{tightest} lower bound from any value of $\mu$ that we computed. Practically, this corresponds to taking a stepwise maximum over the experimental results. 

\section{Experiments}

In this section, we revisit the lower bounds from the introduction (Figure~\ref{fig:intro-ncps}). 
We then explore how the \emph{running time} of our programs is affected by graph size, $\mu$, and rank parameter $k$. 
Further, although directly tracking the true NCP is co-$\mathbb{NP}$ hard, we are still able to study the gap between the true NCP and our lower bound by a squeeze bound or \emph{gap shrinking} analysis. 
In the end we do one interesting
$k$-core analysis on one graph using our algorithm, which reveals
the potential for use in other network analysis tasks.

\enlargethispage{\baselineskip}
\subsection{Computational details.} 
When solving the subproblem \eqref{eq:alm-subprob} in our augmented Lagrangian procedure, we use L-BFGS-B with $m=3$. We set the default tolerance of stationarity condition and primal feasibility condition of our augmented Lagrangian as $10^{-5}$. For each dataset, we pick a set of $\mu$s varying from $10^{-6}$ to $0.4$ which is dense enough to form an informative lower bound curve.
 We exhaustively try $k$ from $\{1, 3, 5, 10\}$.
To generate the NCP plots, we empirically sample a large number of sets from a seeded PageRank based method~\cite{andersen2006-local}.
 Specifically, we randomly sample a large collection of seeds and then try different $\varepsilon$ ranging from $10^{-2}$ to $10^{-8}$.
 For each seeded PageRank we get, we perform a sweepcut to get several
 sets with good $\mu$-conductance.

\subsection{Summary of key findings from introduction.}
The main figure for our experiments is Figure~\ref{fig:intro-ncps}. This shows the lower bounds on the NCPs produced by our procedure. We test our procedure on AstroPh , HepPh \cite{ca-astro}, Email-Enron \cite{Leskovec-2009-community-structure}, Facebook-Page \cite{facebook}, Deezer \cite{deezer} and DBLP \cite{BoVWFI, BRSLLP}. Their sizes are in Table~\ref{tab:dataset}. We can see although there is a gap between
our lower bound and the NCP generated by seeded PageRank,  our algorithm provides an informative lower bound which mirrors the trend of the NCP plot.

\subsection{Running time}
\label{app:exp-running-time}
We illustrate the effects of graph size, $\mu$, and rank parameter $k$ on the running time of our program. The results are summarized in Table~\ref{tab:running-time}. We can clearly observe that with 
graph size or rank parameter increasing, the running time increases but roughly linearly. This is expected because each iteration of L-BFGS-B takes linear time with regard to number of variables. Also, we can see with $\mu$
increasing, the running time tends to increase as well.  The intuition 
is that with $\mu$ increasing, the feasible region of the low-rank program shrinks, which makes optimization harder. Also, our initialization favors smaller $\mu$.

\begin{tuftetable}[t]
\centering
\begin{sc}
\begin{small}
\caption{This table summarizes the running time on two graphs with 
 a few different $\mu$ and $k$ choices. We report the running time of
 the augmented Lagrangian method (ALM) for solving low-rank SDP and eigenvalue computation (EIGVAL) for calculating the dual feasibility violation separately.}
\begin{tabularx}{0.8\linewidth}{lXX>{\raggedleft}XX}
\toprule 
Graph & $\mu$ & $k$ & Time &  \\
\cmidrule{4-5}
& &  & ALM & EIGVAL  \\
 \midrule
\multirow{6}{*}{\shortstack{\textsc{HepPh} \\ $|V| = 11204$ \\$|E| = 117619$}}
& \multirow{3}{*}{0.001} & 3 & 1.7 min & 30.8 s \\
& & 5 & 3.4 min & 48.1 s \\
& & 10 & 6.2 min & 36.6 s \\
\cmidrule{2-5}
& \multirow{3}{*}{0.1} & 3 & 1.8 hrs & 21.4 min \\
& & 5 & 3.1 hrs & 12.9 min\\
& & 10 & 6.9 hrs & 23.0 min\\
\midrule
\multirow{6}{*}{\shortstack{\textsc{DBLP} \\ $|V| = 226413$ \\ $|E| = 716460$}}
& \multirow{3}{*}{0.001} & 3 & 21.8 hrs & 3.1 hrs \\
& & 5 & 1.8 days & 3.5 hrs \\
& & 10 & 2.6 days & 8.7 hrs \\
\cmidrule{2-5}
& \multirow{3}{*}{0.1} & 3 & 1.6 days & 1.9 hrs \\
& & 5 & 3.4 days & 33.6 mins \\
& & 10 & 3.1 days & 5.6 hrs \\
 \bottomrule 
\end{tabularx}

\label{tab:running-time}
\end{small}
\end{sc}
\end{tuftetable}

\subsection{Synthetic graph construction}
\label{app:synthetic-graphs}
The synthetic graphs we use are designed to have a dense core with a geometric like periphery. To do this, we first randomly pick coordinates of $n$ points according to 
a normal distribution in each dimension. Then we scale the coordinates of $90\%$ of them by 1.5 and scale the coordinates of those that remain ($10\%$)  by 0.1. This forms a small dense piece at the center. In the end we link each point to its five geometrically closest neighbors. 

\subsection{Gap shrinking.}
Our theory gives a lower bound on the $\mu$-conductance scores. 
To study how close our lower bound can be to the true $\mu$-conductance score which is co-$\mathbb{NP}$ hard to compute, we study how close an upper bound of the $\mu$-conductance score can be to our lower bound. This kind of squeeze bound gives an indirect way to estimate the real gap.  In order to explore how tight our lower bound is, in other words how small the gap can be, for the HepPh graph, we dramatically increase the number of samples of sets from seeded PageRank. The results are summarized in Figure~\ref{fig:gap-shrinking}. We see that our lower bound is not that loose: about $1/3$ off.

\begin{marginfigure}[-40ex] 
     \centering
        \includegraphics[width=\linewidth]{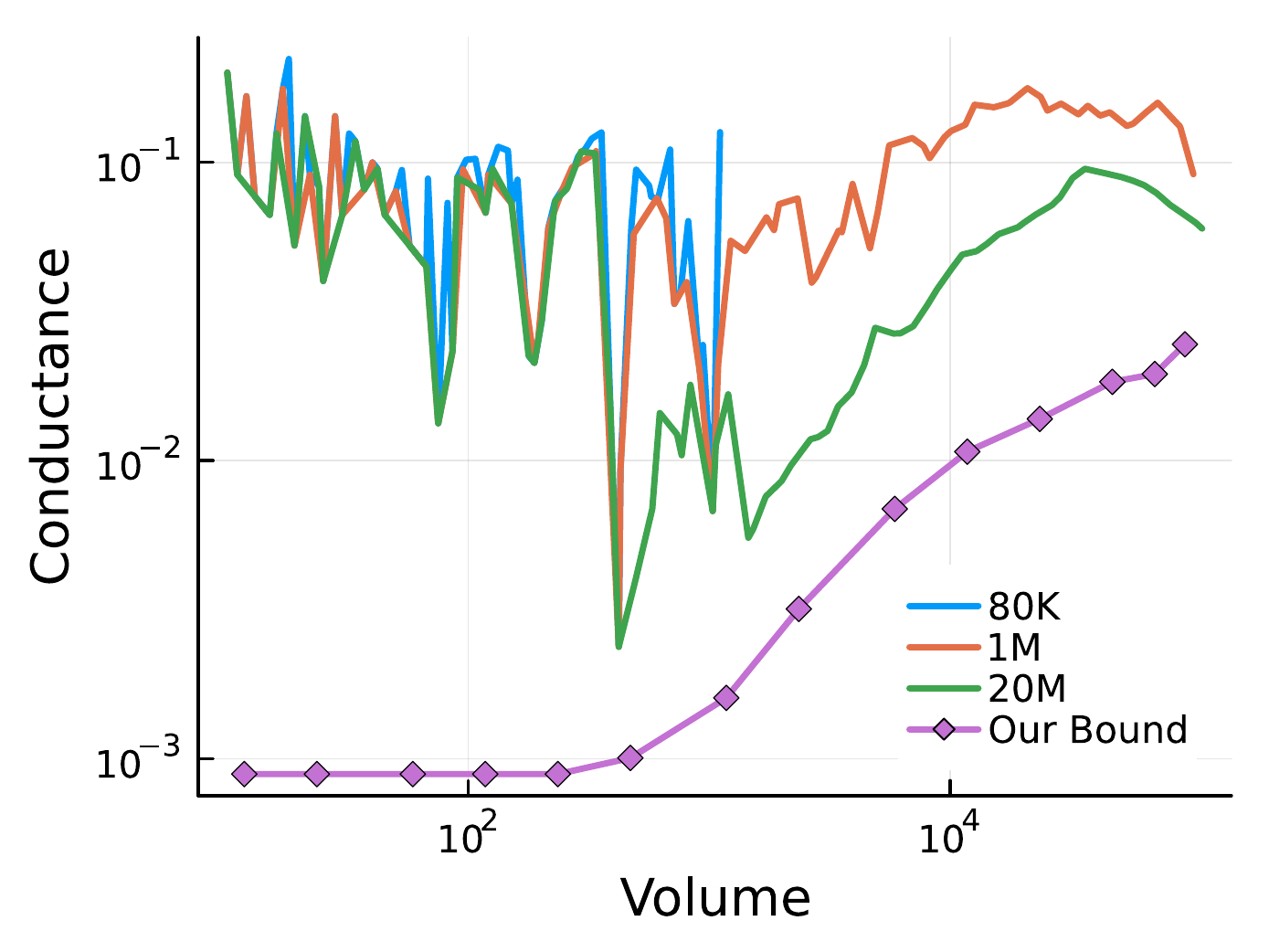}
\caption{Gap shrinking effect illustrated on HepPh. The upper three line plots are the NCPs determined by different number of sets. This shows the more sets 
we search using seeded PageRank, the smaller the gap between the NCP and our lower bound.  }
\label{fig:gap-shrinking}
\end{marginfigure}

\subsection{Investigation with k-cores.}
\begin{marginfigure}
     \centering
        \includegraphics[width=\linewidth]{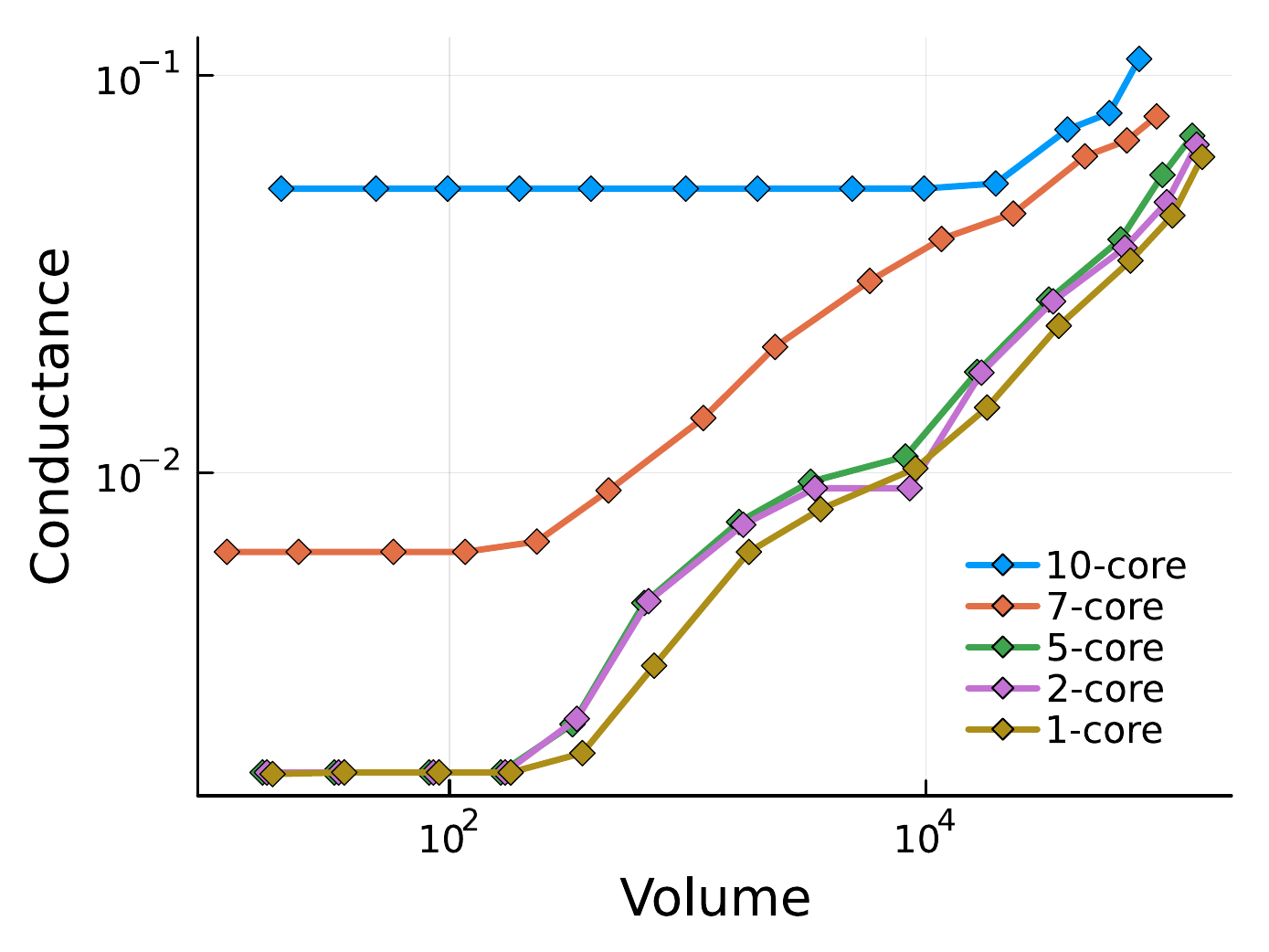}
\caption{$k$-core analysis on Email-Enron.}  
\label{fig:k-core}
\end{marginfigure}
In order to show the potential of applying our method to broader network analysis tasks, we apply our low-rank program to analyze the NCP of $k$-cores of a graph~\cite{Seidman1983-cores}. The inspiration for this study is a discussion over whether the NCP represents a signal or noise mode of a graph~\cite{NEURIPS2018_2a845d4d}. The core number of a vertex in a graph is the largest integer $k$ such that the process of repeatedly removing vertices with degree less than $k$ will not delete this vertex from the graph. So the $1$-core is the entire graph. The $2$-core is there result of sequentially deleting all degree 1 nodes. By analyzing the NCP of $k$-cores with various $k$, we can have a deeper understanding of the structure of a network. The results are summarized in Figure~\ref{fig:k-core}. These show that the NCP structure is preserved for Email-Enron up through the 5-core and is largely preserved at the 7-core. While this single experiment does not to resolve the question of signal vs.~noise for the NCP, it does show how our tools could be used to study it.

\subsection{Comparison with other lower bounds.}
Besides our \(\mu\)-conductance lower bound, there are two previously known lower bounds for
network community profiles mentioned in \cite{Leskovec-2009-community-structure}, 
one spectral bound induced by Cheeger inequality and the Fiedler vector that is independent of volume and 
the other is given by the minimum bisection SDP. 
To get a comprehensive understanding of how our lower bound behaves compared with
existing lower bounds we compare on two graphs. As is shown
in Lemma~\ref{lem:equiv-to-min-bisec-sdp}, the minimum bisection SDP lower bound is actually equivalent to ours at 
\(\mu = \frac12\), here we directly solve our low-rank SDP at \(\mu=\frac12\) instead of 
solving the minimum bisection SDP. The results on AstroPh and HepPh graphs are shown in 
Figure~\ref{fig:lowerbound_comp}. These show that we smoothly interpolate between the bounds as expected.

\begin{tuftefigure}[h]
    \centering
        \includegraphics[width=0.45\linewidth]{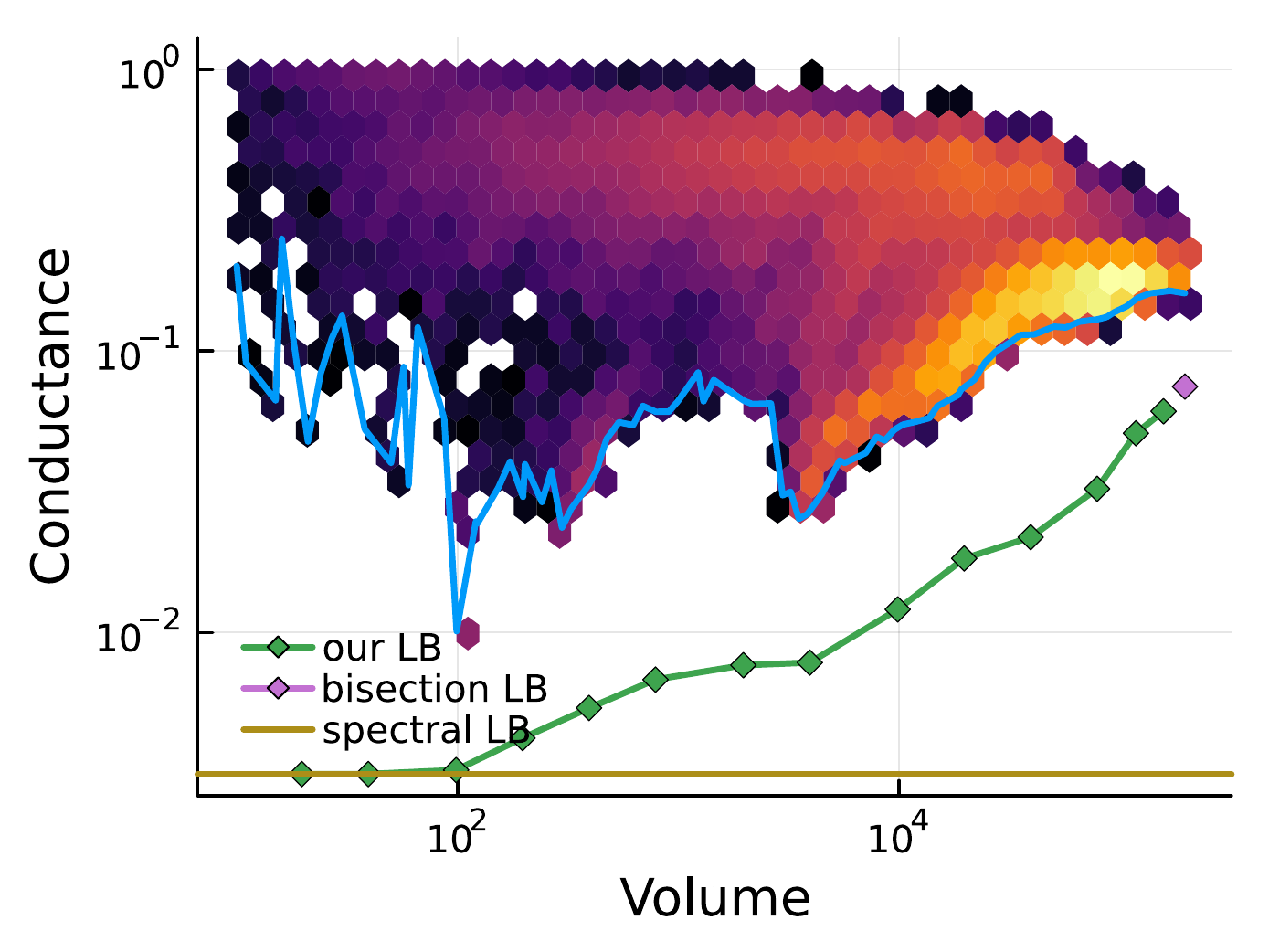}
        \includegraphics[width=0.45\linewidth]{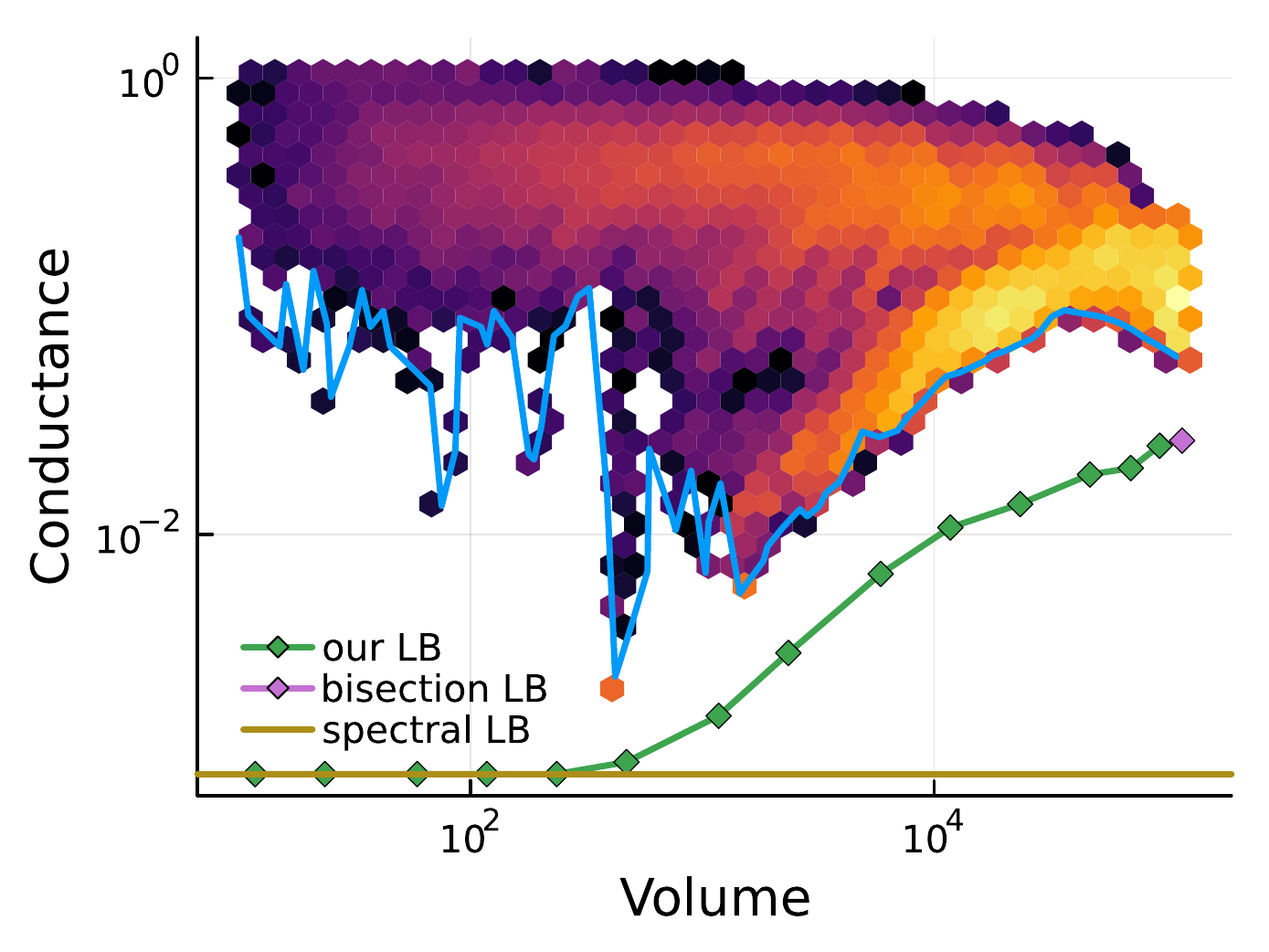}
    \caption{Comparison with spectral lower bounds (bottom yellow line) and minimum bisection SDP lower bounds (purple point at right) on two graphs Astro (Top) and HepPh (Bottom) graphs. We can observe that our \(\mu\)-
    conductance is capable of offering a lower bound at more positions and provides the expected smooth interpolation between these bounds that had been missing from existing approaches.}
    \label{fig:lowerbound_comp}
     \vspace{-\baselineskip}
\end{tuftefigure}

\subsection{Impact of rank on the lower bound.}
The rank parameter \(k\) plays a key role in our solution. As is shown
in Section~\ref{app:exp-running-time}, a higher \(k\) will slow down the computation. It also impacts the a posterori bound we achieve.  We 
 study this tradeoff here. The results on AstroPh and 
HepPh graphs are shown in Figure~\ref{fig:different_k}. We do not observe a strong pattern.   Consequently, we recommend setting $k=5$ as a pragmatic middle ground. 

\begin{tuftefigure}[h!]
\centering
        \includegraphics[width=0.45\linewidth]{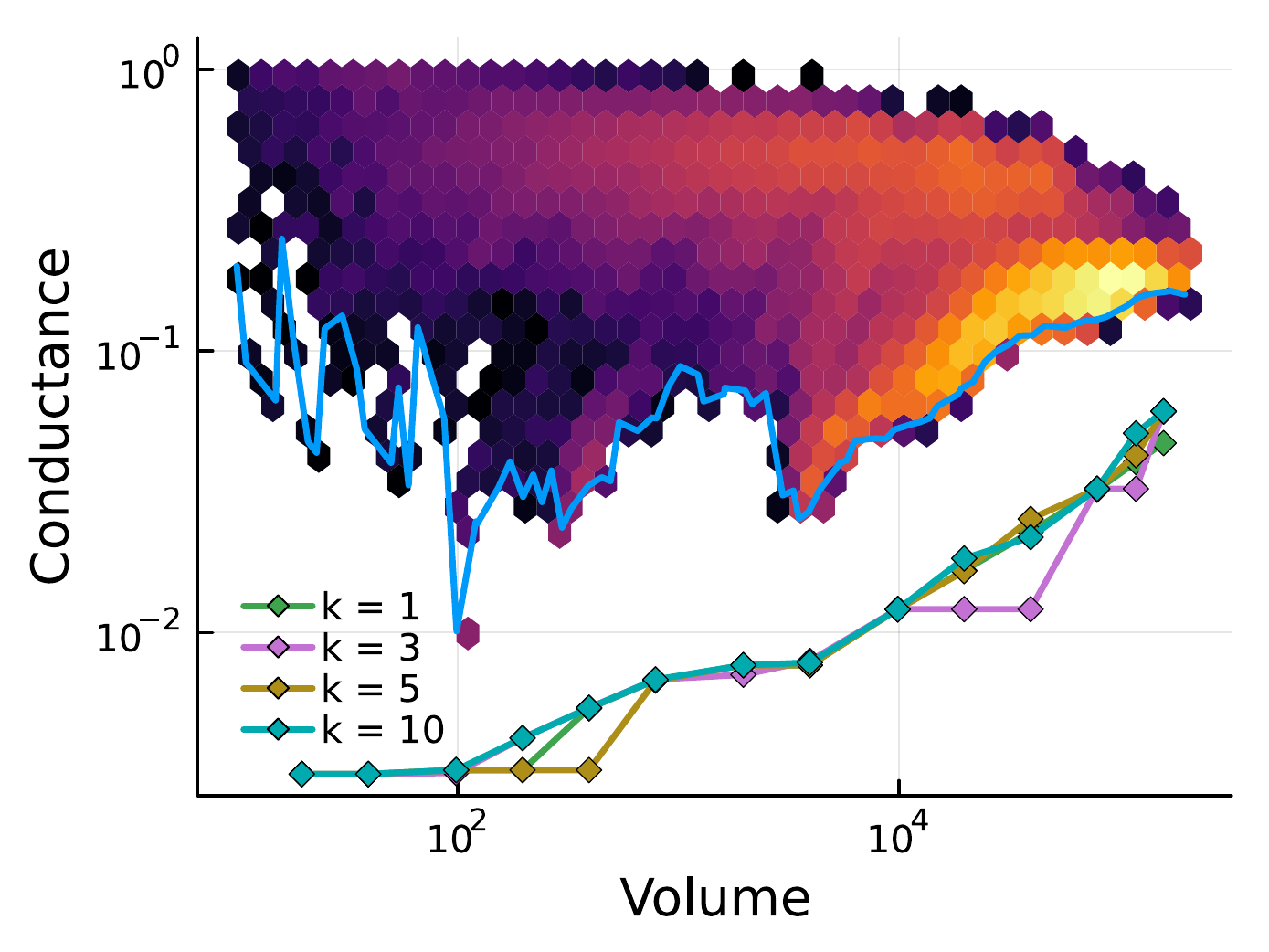} 
        \includegraphics[width=0.45\linewidth]{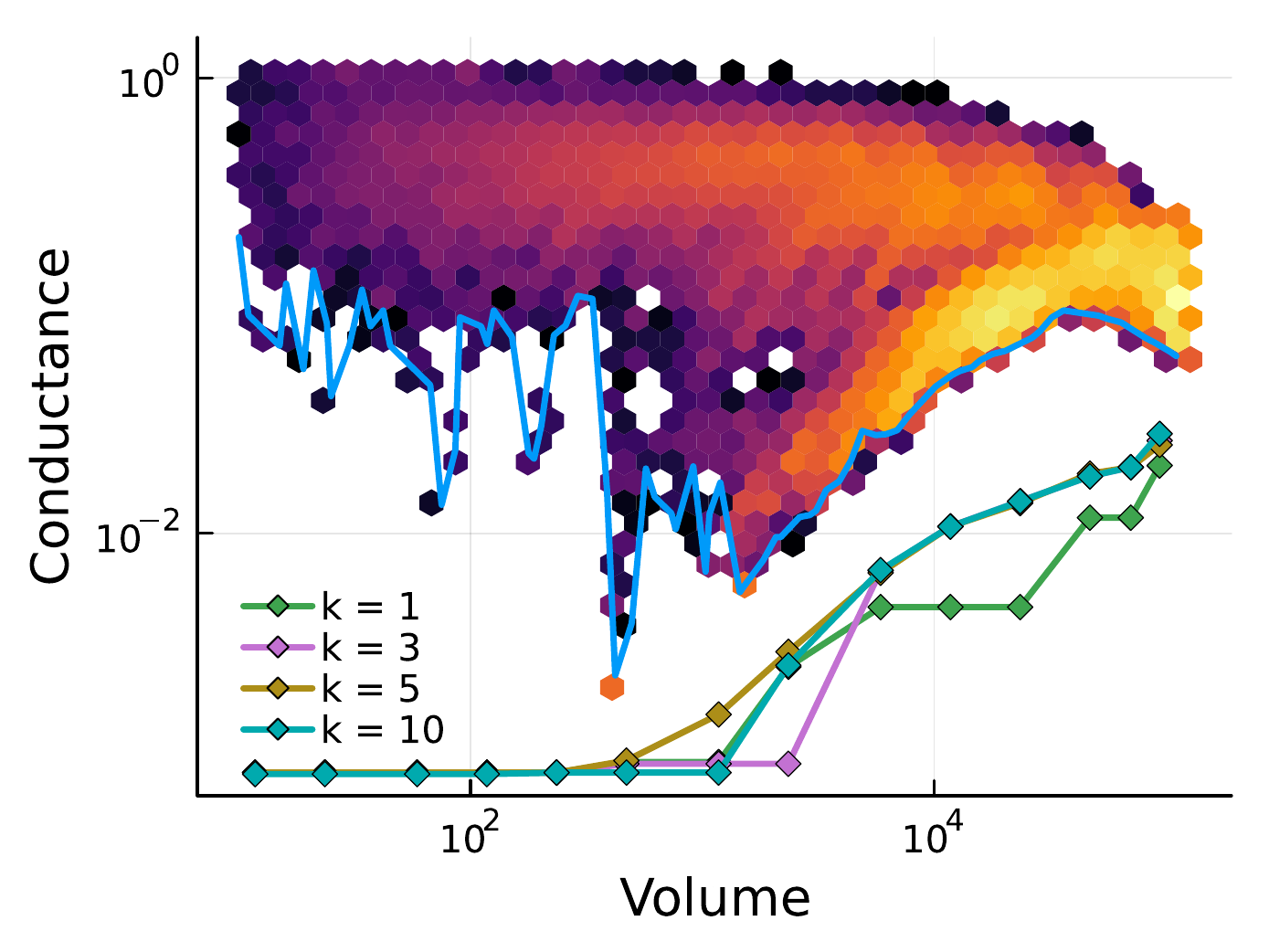}     
        \caption{The effect of rank parameter \(k\) on the lower bound illustrated on Astro (Top) and HepPh (Bottom). We observe that different
    \(k\) exhibit comparable curves but extremely small \(k\) will get worse lower bounds.}
    \label{fig:different_k}
\end{tuftefigure}


\section{Discussion and Future work}
The theorem and algorithm here allows a complete characterization of the network community profile for graphs with over 200,000 vertices. This, in turn, has implications on random sampling on real-world graphs as discussed in the introduction.
There were some theoretical datapoints known regarding bounds on the NCP. For instance, the position of the spectral partitioning set and the associated Cheeger inequality gives one point in the size-vs-conductance space. Another point arises from SDP-based methods~\cite{Leskovec-2009-community-structure} (Section 5.2) for bisection splits. Our tools are the first to interpolate between the two with robust bounds. Our code is available at \url{https://github.com/luotuoqingshan/mu-conductance-low-rank-sdp}. 

Our methods involve choosing a rank parameter, a value of $\mu$, as well as tolerances associated with the L-BFGS-B based procedure. These can have non-trivial interactions. 
Typically, we find that the values of $\theta$ involved in the lower bound are small (think $10^{-4})$. 
In a counter-intuitive observation, we found instances where using a weaker or higher tolerance values results in better or larger lower bounds on the $\mu$-conductance value because the value of $\theta$ was changed. In other scenarios, we found values of $\mu$ where we could not find a way to adjust rank and tolerance to make the value of $\theta$ small enough. This made the lower bound was extremely loose (or even negative in some scenarios). Our choice of overall parameters tends to minimize this. 

Although we have focused on lower bounds in this manuscript (in the interest of space). In a related line of work, we have developed a two-sided bound on the $\mu$-conductance spectral program \eqref{eq:mu_spectral_cut}~\cite{huang2023cheeger}. This gives a full Cheeger-like characterization of this program. There are also numerous variants of the Cheeger inequality \cite{louis2012many, kwok2013improved, koutis2014generalized, zhu2015parallel},
including those versions using multiple eigenvalues as well as more general weightings. Finding a multivector and multiset generalization of these results would be useful in a variety of scenarios. 

At the moment, we are able to handle a variety of real-world graphs, but the runtime is still slow. Computations take days instead of hours. Scaling these algorithms up to the LiveJournal example from Figure~\ref{fig:intro} is another challenge, with many potential avenues including parallelization. Delocalized eigenvectors for spectral clustering also arise from the statistics perspective in terms of regularization~\cite{Amini-2013, NEURIPS2018_2a845d4d}. Many of these techniques involve directly regularizing the graph Laplacian by adding a small multiple of the all ones matrix, akin to the PageRank perturbation. We hope to study relationships between these regularization techniques and $\mu$-conductance in the future, especially as they would help promise much faster runtimes via eigenvector techniques instead of low-rank SDPs.

\begin{fullwidth}
\bibliographystyle{dgleich-bib}
\bibliography{main.bib}

\begin{thebibliography}{58}
\providecommand{\natexlab}[1]{#1}
\providecommand{\bibextraformatting}{\relax}
\bibextraformatting

\bibitem[\protect\citeauthoryear{Ahmed et~al.}{2010}]{ANK10}
N.~\textsc{Ahmed}, J.~\textsc{Neville}, and R.~\textsc{Kompella}.
\newblock \emph{Reconsidering the foundations of network sampling}.
\newblock In \emph{WIN 10}. 2010.

\bibitem[\protect\citeauthoryear{Ahmed et~al.}{2014}]{ahmed2014graph}
N.~K. \textsc{Ahmed}, N.~\textsc{Duffield}, J.~\textsc{Neville}, and
  R.~\textsc{Kompella}.
\newblock \emph{Graph sample and hold: A framework for big-graph analytics}.
\newblock In \emph{SIGKDD}, pp. 1446--1455. 2014.

\bibitem[\protect\citeauthoryear{Amini et~al.}{2013}]{Amini-2013}
A.~A. \textsc{Amini}, A.~\textsc{Chen}, P.~J. \textsc{Bickel}, and
  E.~\textsc{Levina}.
\newblock \href{http://dx.doi.org/10.1214/13-aos1138}{\emph{Pseudo-likelihood
  methods for community detection in large sparse networks}}.
\newblock The Annals of Statistics, 41~(4), 2013.
\newblock \href {http://dx.doi.org/10.1214/13-aos1138}
  {\normalcolor\path{doi:10.1214/13-aos1138}}.

\bibitem[\protect\citeauthoryear{Andersen et~al.}{2006}]{andersen2006-local}
R.~\textsc{Andersen}, F.~\textsc{Chung}, and K.~\textsc{Lang}.
\newblock
  \href{http://www.math.ucsd.edu/~fan/wp/localpartition.pdf}{\emph{Local graph
  partitioning using {PageRank} vectors}}.
\newblock In \emph{Proceedings of the 47th Annual IEEE Symposium on Foundations
  of Computer Science}. 2006.

\bibitem[\protect\citeauthoryear{Andersen and
  Lang}{2008}]{andersen2008-improve}
R.~\textsc{Andersen} and K.~\textsc{Lang}.
\newblock \emph{An algorithm for improving graph partitions}.
\newblock In \emph{Proceedings of the 19th annual ACM-SIAM Symposium on
  Discrete Algorithms ({SODA2008})}, pp. 651--660. 2008.

\bibitem[\protect\citeauthoryear{ApS}{2022}]{mosek}
M.~\textsc{ApS}.
\newblock \href{http://docs.mosek.com/9.0/toolbox/index.html}{\emph{The MOSEK
  optimization toolbox for MATLAB manual. Version 10.0.}}, 2022.

\bibitem[\protect\citeauthoryear{Barab\'{a}si and
  Albert}{1999}]{BarabasiAlbert99}
A.-L. \textsc{Barab\'{a}si} and R.~\textsc{Albert}.
\newblock \emph{Emergence of scaling in random networks}.
\newblock Science, 286, pp. 509--512, 1999.

\bibitem[\protect\citeauthoryear{Boldi et~al.}{2011}]{BRSLLP}
P.~\textsc{Boldi}, M.~\textsc{Rosa}, M.~\textsc{Santini}, and
  S.~\textsc{Vigna}.
\newblock \emph{Layered label propagation: A multiresolution coordinate-free
  ordering for compressing social networks}.
\newblock In \emph{Proceedings of the 20th international conference on World
  Wide Web}, pp. 587--596. 2011.

\bibitem[\protect\citeauthoryear{Boldi and Vigna}{2004}]{BoVWFI}
P.~\textsc{Boldi} and S.~\textsc{Vigna}.
\newblock \emph{The {W}eb{G}raph framework {I}: {C}ompression techniques}.
\newblock In \emph{Proc. of the Thirteenth International World Wide Web
  Conference (WWW 2004)}, pp. 595--601. 2004.

\bibitem[\protect\citeauthoryear{Bonato et~al.}{2014}]{Bonato-2014-dimmatch}
A.~\textsc{Bonato}, D.~F. \textsc{Gleich}, M.~\textsc{Kim},
  D.~\textsc{Mitsche}, P.~\textsc{{Pra\l{}at}}, A.~\textsc{Tian}, and S.~J.
  \textsc{Young}.
\newblock
  \href{http://dx.doi.org/10.1371/journal.pone.0106052}{\emph{Dimensionality of
  social networks using motifs and eigenvalues}}.
\newblock PLoS ONE, 9~(9), p. e106052, 2014.
\newblock \href {http://dx.doi.org/10.1371/journal.pone.0106052}
  {\normalcolor\path{doi:10.1371/journal.pone.0106052}}.

\bibitem[\protect\citeauthoryear{Boumal et~al.}{2016}]{BVBNonconvex2016}
N.~\textsc{Boumal}, V.~\textsc{Voroninski}, and A.~S. \textsc{Bandeira}.
\newblock \emph{The non-convex {{Burer}}\textendash{{Monteiro}} approach works
  on smooth semidefinite programs}.
\newblock In \emph{Proceedings of the 30th {{International Conference}} on
  {{Neural Information Processing Systems}}}, pp. 2765--2773. 2016.

\bibitem[\protect\citeauthoryear{Boyd et~al.}{2004}]{boyd2004convex}
S.~\textsc{Boyd}, S.~P. \textsc{Boyd}, and L.~\textsc{Vandenberghe}.
\newblock \emph{Convex optimization}, Cambridge university press, 2004.

\bibitem[\protect\citeauthoryear{Burer and Monteiro}{2003}]{burer2003nonlinear}
S.~\textsc{Burer} and R.~D. \textsc{Monteiro}.
\newblock \emph{A nonlinear programming algorithm for solving semidefinite
  programs via low-rank factorization}.
\newblock Mathematical Programming, 95~(2), pp. 329--357, 2003.

\bibitem[\protect\citeauthoryear{Byrd et~al.}{1995}]{byrd1995limited}
R.~H. \textsc{Byrd}, P.~\textsc{Lu}, J.~\textsc{Nocedal}, and C.~\textsc{Zhu}.
\newblock \emph{A limited memory algorithm for bound constrained optimization}.
\newblock SIAM Journal on scientific computing, 16~(5), pp. 1190--1208, 1995.

\bibitem[\protect\citeauthoryear{Cheeger}{1969}]{cheeger69}
J.~\textsc{Cheeger}.
\newblock \emph{A lower bound for the smallest eigenvalue of the laplacian}.
\newblock In \emph{Proceedings of the Princeton conference in honor of
  Professor S. Bochner}, pp. 195--199. 1969.

\bibitem[\protect\citeauthoryear{Chierichetti et~al.}{2016}]{ChDa+16}
F.~\textsc{Chierichetti}, A.~\textsc{Dasgupta}, R.~\textsc{Kumar},
  S.~\textsc{Lattanzi}, and T.~\textsc{Sarlos}.
\newblock \emph{On sampling nodes in a network}.
\newblock In \emph{Conference on the World Wide Web (WWW)}. 2016.

\bibitem[\protect\citeauthoryear{Chierichetti and Haddadan}{2018}]{CH18}
F.~\textsc{Chierichetti} and S.~\textsc{Haddadan}.
\newblock \href{http://dx.doi.org/10.4230/LIPIcs.ICALP.2018.149}{\emph{On the
  complexity of sampling vertices uniformly from a graph}}.
\newblock In \emph{International Colloquium on Automata, Languages, and
  Programming (ICALP)}, pp. 149:1--149:13. 2018.
\newblock \href {http://dx.doi.org/10.4230/LIPIcs.ICALP.2018.149}
  {\normalcolor\path{doi:10.4230/LIPIcs.ICALP.2018.149}}.

\bibitem[\protect\citeauthoryear{Chung}{2007}]{Chung-2007-four}
F.~\textsc{Chung}.
\newblock \emph{Four proofs of the cheeger inequality and graph partition
  algorithms}.
\newblock In \emph{Proceedings of the ICCM}, pp. 751--772. 2007.

\bibitem[\protect\citeauthoryear{Cifuentes}{2021}]{CifBurer2021}
D.~\textsc{Cifuentes}.
\newblock \href{http://dx.doi.org/10.1007/s11590-021-01705-4}{\emph{On the
  {{Burer}}\textendash{{Monteiro}} method for general semidefinite programs}}.
\newblock Optimization Letters, 15~(6), pp. 2299--2309, 2021.
\newblock \href {http://dx.doi.org/10.1007/s11590-021-01705-4}
  {\normalcolor\path{doi:10.1007/s11590-021-01705-4}}.

\bibitem[\protect\citeauthoryear{Cifuentes and
  Moitra}{2022}]{cifuentes2022polynomial}
D.~\textsc{Cifuentes} and A.~\textsc{Moitra}.
\newblock \href{https://openreview.net/forum?id=lhLEGeBC-ru}{\emph{Polynomial
  time guarantees for the burer-monteiro method}}.
\newblock In \emph{Advances in Neural Information Processing Systems}. 2022.

\bibitem[\protect\citeauthoryear{Dasgupta et~al.}{2014}]{DaKu14}
A.~\textsc{Dasgupta}, R.~\textsc{Kumar}, and T.~\textsc{Sarlos}.
\newblock \emph{On estimating the average degree}.
\newblock In \emph{Conference on the World Wide Web (WWW)}, pp. 795--806. 2014.

\bibitem[\protect\citeauthoryear{Flake et~al.}{2000}]{Flake-2000-communities}
G.~W. \textsc{Flake}, S.~\textsc{Lawrence}, and C.~L. \textsc{Giles}.
\newblock \href{http://dx.doi.org/10.1145/347090.347121}{\emph{Efficient
  identification of web communities}}.
\newblock In \emph{Proceedings of the Sixth ACM SIGKDD International Conference
  on Knowledge Discovery and Data Mining}, pp. 150--160. 2000.
\newblock \href {http://dx.doi.org/10.1145/347090.347121}
  {\normalcolor\path{doi:10.1145/347090.347121}}.

\bibitem[\protect\citeauthoryear{Gleich and
  Seshadhri}{2012}]{Gleich-2012-neighborhoods}
D.~F. \textsc{Gleich} and C.~\textsc{Seshadhri}.
\newblock \href{http://dx.doi.org/10.1145/2339530.2339628}{\emph{Vertex
  neighborhoods, low conductance cuts, and good seeds for local community
  methods}}.
\newblock In \emph{KDD2012}, pp. 597--605. 2012.
\newblock \href {http://dx.doi.org/10.1145/2339530.2339628}
  {\normalcolor\path{doi:10.1145/2339530.2339628}}.

\bibitem[\protect\citeauthoryear{Goel et~al.}{2006}]{goel2006mixing}
S.~\textsc{Goel}, R.~\textsc{Montenegro}, P.~\textsc{Tetali}, \textsc{et~al.}
\newblock \emph{Mixing time bounds via the spectral profile}.
\newblock Electronic Journal of Probability, 11, pp. 1--26, 2006.

\bibitem[\protect\citeauthoryear{Grover and Leskovec}{2016}]{node2vec}
A.~\textsc{Grover} and J.~\textsc{Leskovec}.
\newblock \href{http://dx.doi.org/10.1145/2939672.2939754}{\emph{Node2vec:
  Scalable feature learning for networks}}.
\newblock In \emph{Proceedings of the 22nd ACM SIGKDD International Conference
  on Knowledge Discovery and Data Mining}, p. 855–864. 2016.
\newblock \href {http://dx.doi.org/10.1145/2939672.2939754}
  {\normalcolor\path{doi:10.1145/2939672.2939754}}.

\bibitem[\protect\citeauthoryear{Huang and Gleich}{2023}]{huang2023cheeger}
Y.~\textsc{Huang} and D.~F. \textsc{Gleich}.
\newblock \emph{A cheeger inequality for size-specific conductance}.
\newblock arXiv preprint arXiv:2303.11452, 2023.

\bibitem[\protect\citeauthoryear{Jeub et~al.}{2015}]{Jeub-2015-locally}
L.~G.~S. \textsc{Jeub}, P.~\textsc{Balachandran}, M.~A. \textsc{Porter}, P.~J.
  \textsc{Mucha}, and M.~W. \textsc{Mahoney}.
\newblock \href{http://dx.doi.org/10.1103/PhysRevE.91.012821}{\emph{Think
  locally, act locally: Detection of small, medium-sized, and large communities
  in large networks}}.
\newblock Phys. Rev. E, 91, p. 012821, 2015.
\newblock \href {http://dx.doi.org/10.1103/PhysRevE.91.012821}
  {\normalcolor\path{doi:10.1103/PhysRevE.91.012821}}.

\bibitem[\protect\citeauthoryear{Koutis et~al.}{2014}]{koutis2014generalized}
I.~\textsc{Koutis}, G.~\textsc{Miller}, and R.~\textsc{Peng}.
\newblock \emph{A generalized cheeger inequality}.
\newblock arXiv preprint arXiv:1412.6075, 2014.

\bibitem[\protect\citeauthoryear{Kwok et~al.}{2013}]{kwok2013improved}
T.~C. \textsc{Kwok}, L.~C. \textsc{Lau}, Y.~T. \textsc{Lee},
  S.~\textsc{Oveis~Gharan}, and L.~\textsc{Trevisan}.
\newblock \emph{Improved cheeger's inequality: Analysis of spectral
  partitioning algorithms through higher order spectral gap}.
\newblock In \emph{Proceedings of the forty-fifth annual ACM symposium on
  Theory of computing}, pp. 11--20. 2013.

\bibitem[\protect\citeauthoryear{Lang and Rao}{2004}]{Lang-2004-mqi}
K.~\textsc{Lang} and S.~\textsc{Rao}.
\newblock \href{http://dx.doi.org/10.1007/978-3-540-25960-2_25}{\emph{A
  flow-based method for improving the expansion or conductance of graph cuts}}.
\newblock In \emph{Integer Programming and Combinatorial Optimization}, pp.
  325--337. Springer Berlin Heidelberg, 2004.
\newblock \href {http://dx.doi.org/10.1007/978-3-540-25960-2_25}
  {\normalcolor\path{doi:10.1007/978-3-540-25960-2_25}}.

\bibitem[\protect\citeauthoryear{Lemon et~al.}{2016}]{lemon2016low}
A.~\textsc{Lemon}, A.~M.-C. \textsc{So}, Y.~\textsc{Ye}, \textsc{et~al.}
\newblock \emph{Low-rank semidefinite programming: Theory and applications}.
\newblock Foundations and Trends{\textregistered} in Optimization, 2~(1-2), pp.
  1--156, 2016.

\bibitem[\protect\citeauthoryear{Leskovec and Faloutsos}{2006}]{LF06}
J.~\textsc{Leskovec} and C.~\textsc{Faloutsos}.
\newblock \emph{Sampling from large graphs}.
\newblock In \emph{Knowledge Data and Discovery (KDD)}, pp. 631--636. 2006.

\bibitem[\protect\citeauthoryear{Leskovec et~al.}{2007}]{ca-astro}
J.~\textsc{Leskovec}, J.~\textsc{Kleinberg}, and C.~\textsc{Faloutsos}.
\newblock \emph{Graph evolution: Densification and shrinking diameters}.
\newblock ACM transactions on Knowledge Discovery from Data (TKDD), 1~(1), pp.
  2--es, 2007.

\bibitem[\protect\citeauthoryear{Leskovec et~al.}{2008}]{LLDM08}
J.~\textsc{Leskovec}, K.~J. \textsc{Lang}, A.~\textsc{Dasgupta}, and M.~W.
  \textsc{Mahoney}.
\newblock \emph{Statistical properties of community structure in large social
  and information networks}.
\newblock In \emph{Proceedings of the 17th international conference on World
  Wide Web}, pp. 695--704. 2008.

\bibitem[\protect\citeauthoryear{Leskovec
  et~al.}{2009}]{Leskovec-2009-community-structure}
---{}---{}---.
\newblock
  \href{http://dx.doi.org/10.1080/15427951.2009.10129177}{\emph{Community
  structure in large networks: Natural cluster sizes and the absence of large
  well-defined clusters}}.
\newblock Internet Mathematics, 6~(1), pp. 29--123, 2009.
\newblock \href {http://dx.doi.org/10.1080/15427951.2009.10129177}
  {\normalcolor\path{doi:10.1080/15427951.2009.10129177}}.

\bibitem[\protect\citeauthoryear{Leskovec et~al.}{2010}]{LLM10}
J.~\textsc{Leskovec}, K.~J. \textsc{Lang}, and M.~\textsc{Mahoney}.
\newblock \emph{Empirical comparison of algorithms for network community
  detection}.
\newblock In \emph{Proceedings of the 19th international conference on World
  wide web}, pp. 631--640. 2010.

\bibitem[\protect\citeauthoryear{Louis et~al.}{2012}]{louis2012many}
A.~\textsc{Louis}, P.~\textsc{Raghavendra}, P.~\textsc{Tetali}, and
  S.~\textsc{Vempala}.
\newblock \emph{Many sparse cuts via higher eigenvalues}.
\newblock In \emph{Proceedings of the forty-fourth annual ACM symposium on
  Theory of computing}, pp. 1131--1140. 2012.

\bibitem[\protect\citeauthoryear{Lovasz and
  Simonovits}{1990}]{Lovasz-1990-mixing}
L.~\textsc{Lovasz} and M.~\textsc{Simonovits}.
\newblock \href{http://dx.doi.org/10.1109/FSCS.1990.89553}{\emph{The mixing
  rate of markov chains, an isoperimetric inequality, and computing the
  volume}}.
\newblock In \emph{Proceedings of the 31st Annual Symposium on Foundations of
  Computer Science}, pp. 346--354 vol. 1. 1990.
\newblock \href {http://dx.doi.org/10.1109/FSCS.1990.89553}
  {\normalcolor\path{doi:10.1109/FSCS.1990.89553}}.

\bibitem[\protect\citeauthoryear{{Maiya} and {Berger-Wolf}}{2011}]{MB11}
A.~S. \textsc{{Maiya}} and T.~Y. \textsc{{Berger-Wolf}}.
\newblock \emph{Benefits of bias: Towards better characterization of network
  sampling}.
\newblock In \emph{Knowledge Data and Discovery (KDD)}, pp. 105--113. 2011.
\newblock \href {http://arxiv.org/abs/1109.3911}
  {\normalcolor\path{arXiv:1109.3911}}.

\bibitem[\protect\citeauthoryear{Newman}{2003}]{Ne03}
M.~E.~J. \textsc{Newman}.
\newblock \href{http://dx.doi.org/10.1137/S003614450342480}{\emph{The structure
  and function of complex networks}}.
\newblock SIAM Review, 45~(2), pp. 167--256, 2003.
\newblock \href {http://dx.doi.org/10.1137/S003614450342480}
  {\normalcolor\path{doi:10.1137/S003614450342480}}.

\bibitem[\protect\citeauthoryear{Newman}{2006}]{Newman-2006-modularity}
---{}---{}---.
\newblock \href{http://dx.doi.org/10.1073/pnas.0601602103}{\emph{Modularity and
  community structure in networks}}.
\newblock Proceedings of the National Academy of Sciences, 103~(23), pp.
  8577--8582, 2006.
\newblock \href
  {http://arxiv.org/abs/http://www.pnas.org/content/103/23/8577.full.pdf+html}
  {\normalcolor\path{arXiv:http://www.pnas.org/content/103/23/8577.full.pdf+html}},
  \href {http://dx.doi.org/10.1073/pnas.0601602103}
  {\normalcolor\path{doi:10.1073/pnas.0601602103}}.

\bibitem[\protect\citeauthoryear{Nocedal and Wright}{1999}]{NW99numopt}
J.~\textsc{Nocedal} and S.~J. \textsc{Wright}.
\newblock \emph{Numerical optimization}, Springer, 1999.

\bibitem[\protect\citeauthoryear{O'Donoghue et~al.}{2016}]{SCS}
B.~\textsc{O'Donoghue}, E.~\textsc{Chu}, N.~\textsc{Parikh}, and
  S.~\textsc{Boyd}.
\newblock \href{http://stanford.edu/~boyd/papers/scs.html}{\emph{Conic
  optimization via operator splitting and homogeneous self-dual embedding}}.
\newblock Journal of Optimization Theory and Applications, 169~(3), pp.
  1042--1068, 2016.

\bibitem[\protect\citeauthoryear{Perozzi et~al.}{2014}]{deepwalk}
B.~\textsc{Perozzi}, R.~\textsc{Al-Rfou}, and S.~\textsc{Skiena}.
\newblock \href{http://dx.doi.org/10.1145/2623330.2623732}{\emph{Deepwalk:
  Online learning of social representations}}.
\newblock In \emph{Proceedings of the 20th ACM SIGKDD International Conference
  on Knowledge Discovery and Data Mining}, p. 701–710. 2014.
\newblock \href {http://dx.doi.org/10.1145/2623330.2623732}
  {\normalcolor\path{doi:10.1145/2623330.2623732}}.

\bibitem[\protect\citeauthoryear{Raghavendra
  et~al.}{2010}]{Raghavendra:2010:AIS:1806689.1806776}
P.~\textsc{Raghavendra}, D.~\textsc{Steurer}, and P.~\textsc{Tetali}.
\newblock \href{http://dx.doi.org/10.1145/1806689.1806776}{\emph{Approximations
  for the isoperimetric and spectral profile of graphs and related
  parameters}}.
\newblock In \emph{Proceedings of the Forty-second ACM Symposium on Theory of
  Computing}, pp. 631--640. 2010.
\newblock \href {http://dx.doi.org/10.1145/1806689.1806776}
  {\normalcolor\path{doi:10.1145/1806689.1806776}}.

\bibitem[\protect\citeauthoryear{Ribeiro and
  Towsley}{2010}]{ribeiro2010estimating}
B.~\textsc{Ribeiro} and D.~\textsc{Towsley}.
\newblock \emph{Estimating and sampling graphs with multidimensional random
  walks}.
\newblock In \emph{Proceedings of the 10th ACM SIGCOMM conference on Internet
  measurement}, pp. 390--403. 2010.

\bibitem[\protect\citeauthoryear{Ribeiro and Towsley}{2012}]{RT12}
---{}---{}---.
\newblock \emph{On the estimation accuracy of degree distributions from graph
  sampling}.
\newblock In \emph{Annual Conference on Decision and Control (CDC)}, pp.
  5240--5247. 2012.

\bibitem[\protect\citeauthoryear{Rozemberczki et~al.}{2019}]{facebook}
B.~\textsc{Rozemberczki}, C.~\textsc{Allen}, and R.~\textsc{Sarkar}.
\newblock \emph{Multi-scale attributed node embedding}.
\newblock 2019.
\newblock \href {http://arxiv.org/abs/1909.13021}
  {\normalcolor\path{arXiv:1909.13021}}.

\bibitem[\protect\citeauthoryear{Rozemberczki and Sarkar}{2020}]{deezer}
B.~\textsc{Rozemberczki} and R.~\textsc{Sarkar}.
\newblock \emph{{Characteristic Functions on Graphs: Birds of a Feather, from
  Statistical Descriptors to Parametric Models}}.
\newblock In \emph{Proceedings of the 29th ACM International Conference on
  Information and Knowledge Management (CIKM '20)}, p. 1325–1334. 2020.

\bibitem[\protect\citeauthoryear{Seidman}{1983}]{Seidman1983-cores}
S.~B. \textsc{Seidman}.
\newblock \href{http://dx.doi.org/10.1016/0378-8733(83)90028-X}{\emph{Network
  structure and minimum degree}}.
\newblock Social Networks, 5~(3), pp. 269--287, 1983.
\newblock \href {http://dx.doi.org/10.1016/0378-8733(83)90028-X}
  {\normalcolor\path{doi:10.1016/0378-8733(83)90028-X}}.

\bibitem[\protect\citeauthoryear{Seshadhri et~al.}{2012}]{SeKoPi11}
C.~\textsc{Seshadhri}, T.~G. \textsc{Kolda}, and A.~\textsc{Pinar}.
\newblock \href{http://dx.doi.org/10.1103/PhysRevE.85.056109}{\emph{Community
  structure and scale-free collections of {Erd\"os-R\'enyi} graphs}}.
\newblock Physical Review E, 85~(5), p. 056109, 2012.
\newblock \href {http://dx.doi.org/10.1103/PhysRevE.85.056109}
  {\normalcolor\path{doi:10.1103/PhysRevE.85.056109}}.

\bibitem[\protect\citeauthoryear{Seshadhri et~al.}{2020}]{SSS20}
C.~\textsc{Seshadhri}, A.~\textsc{Sharma}, A.~\textsc{Stolman}, and
  A.~\textsc{Goel}.
\newblock \emph{The impossibility of low-rank representations for triangle-rich
  complex networks}.
\newblock Proceedings of the National Academy of Sciences, 117~(11), pp.
  5631--5637, 2020.

\bibitem[\protect\citeauthoryear{Tang et~al.}{2015}]{line}
J.~\textsc{Tang}, M.~\textsc{Qu}, M.~\textsc{Wang}, M.~\textsc{Zhang},
  J.~\textsc{Yan}, and Q.~\textsc{Mei}.
\newblock \href{http://dx.doi.org/10.1145/2736277.2741093}{\emph{Line:
  Large-scale information network embedding}}.
\newblock In \emph{Proceedings of the 24th International Conference on World
  Wide Web}, p. 1067–1077. 2015.
\newblock \href {http://dx.doi.org/10.1145/2736277.2741093}
  {\normalcolor\path{doi:10.1145/2736277.2741093}}.

\bibitem[\protect\citeauthoryear{Vandenberghe and Boyd}{1996}]{VB96sdp}
L.~\textsc{Vandenberghe} and S.~\textsc{Boyd}.
\newblock \emph{Semidefinite programming}.
\newblock SIAM review, 38~(1), pp. 49--95, 1996.

\bibitem[\protect\citeauthoryear{von Luxburg
  et~al.}{2012}]{Luxburg-2012-clustering}
U.~\textsc{von Luxburg}, R.~C. \textsc{Williamson}, and I.~\textsc{Guyon}.
\newblock
  \href{http://proceedings.mlr.press/v27/luxburg12a.html}{\emph{Clustering:
  Science or art?}}
\newblock In \emph{Proceedings of ICML Workshop on Unsupervised and Transfer
  Learning}, pp. 65--79. 2012.

\bibitem[\protect\citeauthoryear{Watts and Strogatz}{1998}]{WaSt98}
D.~\textsc{Watts} and S.~\textsc{Strogatz}.
\newblock \href{http://dx.doi.org/10.1038/30918}{\emph{Collective dynamics of
  `small-world' networks}}.
\newblock Nature, 393, pp. 440--442, 1998.
\newblock \href {http://dx.doi.org/10.1038/30918}
  {\normalcolor\path{doi:10.1038/30918}}.

\bibitem[\protect\citeauthoryear{Zhang and Rohe}{2018}]{NEURIPS2018_2a845d4d}
Y.~\textsc{Zhang} and K.~\textsc{Rohe}.
\newblock
  \href{https://proceedings.neurips.cc/paper/2018/file/2a845d4d23b883acb632fefd814e175f-Paper.pdf}{\emph{Understanding
  regularized spectral clustering via graph conductance}}.
\newblock In \emph{Advances in Neural Information Processing Systems}. 2018.

\bibitem[\protect\citeauthoryear{Zhu and Gleich}{2016}]{zhu2015parallel}
Y.~\textsc{Zhu} and D.~F. \textsc{Gleich}.
\newblock \href{http://dx.doi.org/10.1016/j.parco.2016.02.003}{\emph{A parallel
  min-cut algorithm using iteratively reweighted least squares}}.
\newblock Parallel Computing, 59, pp. 43--59, 2016.
\newblock \href {http://dx.doi.org/10.1016/j.parco.2016.02.003}
  {\normalcolor\path{doi:10.1016/j.parco.2016.02.003}}.

\end{thebibliography}
\end{fullwidth}

\appendix

\newpage

\end{document}